\documentclass[11pt]{article}

\usepackage{amsmath,amssymb,amsfonts,amsthm,mathtools}
\usepackage[margin=1in]{geometry}
\usepackage{booktabs}
\usepackage{natbib}
\usepackage{enumitem}
\usepackage{array}
\usepackage{subcaption}
\usepackage{enumitem}
\usepackage{microtype} 
\usepackage{authblk}
\usepackage{bm}
\usepackage{xcolor}
\definecolor{royalblue}{rgb}{0.25, 0.41, 0.88}
\usepackage{hyperref}
\hypersetup{
    colorlinks= true,
    linkcolor= black,
    filecolor= cyan,      
    urlcolor= black, 
    citecolor= royalblue
}
\usepackage[nameinlink,capitalize,noabbrev]{cleveref}
\usepackage{tikz}
\usetikzlibrary{shapes, arrows, positioning}

\newtheorem{definition}{Definition}
\newtheorem{proposition}{Proposition}
\newtheorem{theorem}{Theorem}
\newtheorem{remark}{Remark}
\newtheorem{lemma}{Lemma}

\title{\bf Causal Discovery in Structural VAR Models Under Equal Noise Variance\vspace{5mm}}

\author[1]{SeyedSina Seyedi HasanAbadi}
\author[2]{Fahimeh Arab}
\author[1]{Erfan Nozari}
\author[3]{AmirEmad Ghassami\thanks{Correspondence Email: \texttt{ghassami@bu.edu}}}

\affil[1]{Bourns College of Engineering, University of California, Riverside}
\affil[2]{University of California, San Francisco}
\affil[3]{Department of Mathematics and Statistics, Boston University}

\date{}

\begin{document}

\maketitle

\begin{abstract}
Causal discovery from multivariate time series is challenging when causal effects may occur both across time and within the same sampling interval.  This issue is especially important in applications such as neuroscience, where the sampling rate may be coarse relative to the underlying dynamics and contemporaneous effects need not form an acyclic graph.  We study causal discovery in linear Gaussian structural VAR models under an equal noise variance assumption, meaning that the structural noise terms have a common variance.  Unlike the DAG-based cross-sectional equal noise variance setting, the time-series setting considered here does not generally yield point identification of a unique causal graph.  Instead, multiple structural VAR parameterizations can induce the same stationary observed process law.  We introduce a notion of observational equivalence tailored to this setting and show that the corresponding equivalence class is characterized by orthogonal transformations of the structural equations together with a global positive scale.  This characterization leads to an equivalence-aware model discrepancy, the observational alignment discrepancy, which compares structural models modulo transformations that preserve the observed law.  Building on this theory, we propose \textsc{ENVAR}, a sparsity-based procedure that searches over the induced observational equivalence class for a sparse normalized structural representative.  We evaluate the proposed methodology on synthetic structural VAR data and on an fMRI dataset.
\end{abstract}

\medskip
\noindent\textbf{Keywords:} Causal Discovery; Structural VAR Model; Contemporaneous Effects; Equal Noise Variance

\section{Introduction}
\label{sec:introduction}

Causal discovery from multivariate time series is central in applications in which interventions are expensive, slow, or ethically constrained.  In functional magnetic resonance imaging (fMRI), for example, one often observes simultaneous activity traces from many brain regions or sensors and wishes to infer which regions influence which other regions. Linear vector autoregressive (VAR) models and their structural variants are particularly powerful for answering such questions~\citep{nozari2023macroscopic}, and Granger-causal methods remain widely used in neuroscience and neuroimaging \citep{granger1969investigating,goebel2003investigating,seth2015granger}.  At the same time, fMRI data create a particular difficulty: the sampling interval is often too coarse relative to the timescale of latent neural dynamics ($\sim$1s vs. $\sim$10ms)~\citep{huettel2009functional}, and a purely lagged model can therefore misattribute within-bin interactions to lagged predictive effects. This motivates structural time-series models with both lagged and contemporaneous effects.

We study causal discovery in structural VAR models under an equal noise variance assumption, meaning that the structural noise vector is centered and has covariance equal to a common scalar multiple of the identity matrix.  The equal noise variance assumption was introduced as an identifying restriction for cross-sectional linear Gaussian structural equation models by \citet{peters2014identifiability} and was further studied by \citet{chen2019causal}.  It is especially natural when all observed variables are measured in the same physical domain and on a comparable scale. In neuroimaging, this situation arises, for example, for region-level fMRI time series, especially after standard preprocessing and normalization steps. Given the same, effective-connectivity models are routinely formulated in the fMRI literature as dynamical systems driven by homogeneous stochastic fluctuations \citep{friston2003dynamic,smith2011network,ritter2013virtual}. 

The setting considered here differs substantially from the cross-sectional structural equation model under equal noise variance of \citet{peters2014identifiability}.  In their setting, the graph is a directed acyclic graph (DAG), acyclicity provides a causal ordering, and the equal noise variance assumption yields point identification of the causal DAG from the observational Gaussian distribution. In the context of algorithm design for fMRI, however, the assumption of acyclicity is in great contrast with the highly cyclic nature of brain networks. Furthermore, in time-series data with contemporaneous effects, this point identification generally fails: multiple different graph structures can perfectly produce the exact same observed data distribution.  Because point identification is generally unavailable, the correct target is not a single structural graph, but an observational equivalence class.  Therefore, we need a careful notion of equivalence that captures exactly the information available from the observational time series.  Hence, any method that distinguishes between two observationally equivalent structures must be using assumptions beyond the observed distribution.

In this paper, we characterize the resulting observational equivalence class for structural VAR models under equal noise variance.  We show that observationally equivalent models are related by an orthogonal transformation of the structural equations together with a global positive scale.  This characterization also motivates an equivalence-aware model discrepancy, which we call the \emph{observational alignment discrepancy}.  Unlike raw Frobenius error between structural matrices, the observational alignment discrepancy assigns zero error to different representatives of the same observational law. Thus it is the appropriate notion of model-level error for both theory and simulations in this setting.

Our main contributions are as follows.
\begin{itemize}
    \item We formulate causal discovery in structural VAR(1) models under an equal noise variance assumption, allowing both lagged effects and contemporaneous effects that are not assumed to form a DAG.
    \item We introduce the relevant notion of observational equivalence for this setting and prove an exact characterization of the corresponding equivalence class.
    \item We introduce the observational alignment discrepancy as an equivalence-aware error measure for comparing structural VAR models under equal noise variance.
    \item We propose a sparsity-based estimator that searches over the observational equivalence class for a sparse normalized representative. We refer to the resulting procedure as \textsc{ENVAR}, short for Equal-Noise VAR causal discovery.
\end{itemize}

\textbf{Related work.}
Several lines of work address causal discovery from time series.  Granger causality and reduced-form VAR methods orient effects by temporal precedence but do not, by themselves, identify contemporaneous structure \citep{granger1969investigating,lutkepohl2005new,seth2015granger}.  Structural VAR methods in econometrics recover instantaneous effects only after adding identifying restrictions such as recursive orderings, sign restrictions, heteroskedasticity, or non-Gaussianity \citep{sims1980macroeconomics,moneta2011causal,hyvarinen2010estimation}.  TiMINo and related functional-model approaches use restrictions such as additive noise or independent residual processes to obtain identifiability in nonlinear time-series models \citep{peters2013causal}.  Constraint-based methods such as PCMCI and PCMCI+ learn lagged and, in later variants, contemporaneous relations under conditional-independence assumptions \citep{runge2019detecting,runge2020discovering,arab2025wholebrain}; SVAR-FCI handles latent confounding in time series but returns partial graphical information rather than a fully parameterized structural VAR \citep{malinsky2018causal}.  Score-based dynamic Bayesian-network methods such as DYNOTEARS estimate contemporaneous and lagged edges but impose an acyclicity constraint on the contemporaneous layer \citep{pamfil2020dynotears}. A broader survey of causal discovery methods for time series is given by \citet{assaad2022survey}.

\section{Model Description}
\label{sec:model_desc}

\subsection{Structural VAR Model Under Equal Noise Variance}

Let \(X_t=(X_{1t},\ldots,X_{pt})^\top\in\mathbb{R}^p\) denote the vector of observed variables at time \(t\).  We consider the linear Gaussian structural VAR(1) model
\begin{equation}\label{eq:var_gen_form}
    X_t = A_0 X_t + A_1 X_{t-1} + e_t,
    \qquad
    e_t \overset{\mathrm{i.i.d.}}{\sim} \mathcal{N}(0,\sigma^2 I_p),
    \qquad \sigma>0 .
\end{equation}
Here \(A_0\in\mathbb{R}^{p\times p}\) is the contemporaneous effect matrix and \(A_1\in\mathbb{R}^{p\times p}\) is the lag-one effect matrix.  The assumption \(\operatorname{Cov}(e_t)=\sigma^2 I_p\) is the \emph{equal noise variance} assumption.  It means that the structural noise terms are mutually independent and have a common variance.  This assumption concerns the structural noise vector \(e_t\), not the marginal variances of the observed variables \(X_t\) and not the covariance matrix of the reduced-form residuals.
We are specifically interested in models that satisfy the \emph{normalization} requirement $\operatorname{diag}(A_0)=0$.
Equivalently, if $B=I-A_0$, then \(\operatorname{diag}(B)=\mathbf{1}\).  This convention rules out instantaneous self effects (self loops) and fixes the scale of the structural equations.  Self dependence is instead represented through the lagged matrix \(A_1\).  An entry \((A_0)_{ij}\neq 0\) represents a contemporaneous effect from variable \(j\) to variable \(i\) within the same sampling interval, while an entry \((A_1)_{ij}\neq 0\) represents a lagged effect from \(X_{j,t-1}\) to \(X_{i,t}\).  The associated time-unrolled graph contains an edge \(X_{j,t}\to X_{i,t}\) when \((A_0)_{ij}\neq 0\), and an edge \(X_{j,t-1}\to X_{i,t}\) when \((A_1)_{ij}\neq 0\).

Lagged edges are naturally oriented forward in time.  Contemporaneous edges, however, may form directed cycles because they summarize interactions occurring inside one sampling interval.  This is important for data such as fMRI, where the sampling rate and hemodynamic response can be much slower than the latent neural interactions \citep{friston2003dynamic,seth2015granger}.  Accordingly, the contemporaneous matrix \(A_0\) is not assumed to be acyclic.

We require that \(B=I-A_0\) is invertible, so that the simultaneous structural equations are well defined.  Under this assumption, \cref{eq:var_gen_form} can be written in \emph{reduced form} as
\begin{equation}\label{eq:reduced_form_section2}
    X_t = \Phi X_{t-1} + u_t,
    \qquad
    \Phi := B^{-1}A_1,
    \qquad
    u_t := B^{-1}e_t .
\end{equation}
The reduced form residual covariance is
\begin{equation}\label{eq:reduced_cov_section2}
    \Sigma_u
    :=
    \operatorname{Cov}(u_t)
    =
    \sigma^2 B^{-1}B^{-\top}.
\end{equation}
Even though the structural noise vector \(e_t\) satisfies equal noise variance, the reduced-form residual covariance \(\Sigma_u\) is generally not diagonal and generally does not have equal diagonal entries.  Similarly, the stationary covariance of \(X_t\) is generally not proportional to the identity matrix.

Throughout the paper, we assume that \(A_0\), \(A_1\), and \(\sigma\) are time-invariant.  Hence the corresponding time-unrolled causal graph is fixed over time.  We also assume that the reduced-form transition matrix is stable, $\rho(\Phi)<1$, where \(\rho(\Phi)\) denotes the spectral radius of \(\Phi\).  

\textbf{Relation to Cross-Sectional Equal Noise Variance Models.}
The model in \cref{eq:var_gen_form} is related to, but substantially different from, the cross-sectional linear Gaussian structural equation model studied by \citet{peters2014identifiability} and \citet{chen2019causal}.  In the cross-sectional setting, one observes i.i.d. samples from a model of the form
\(
    X = A_0 X + e,
    ~
    e\sim \mathcal{N}(0,\sigma^2 I_p),
\)
where the directed graph associated with \(A_0\) is assumed to be a DAG.  Under the equal noise variance assumption and the DAG assumption, \citet{peters2014identifiability} show that the causal DAG is point identifiable from the observational Gaussian distribution.
Our setting differs in two essential ways.  First, the model is dynamic: the present state depends on both contemporaneous variables and the previous state.  Second, the contemporaneous effect matrix \(A_0\) is not assumed to be acyclic or triangular under any ordering.  Lagged edges are naturally oriented forward in time, but contemporaneous edges among variables at the same time point may form directed cycles.  Consequently, the contemporaneous part of the structural VAR model is not a DAG in general.
A tempting idea is to unfold a VAR(1) model over one time step and treat the pair \(Y_t := (X_{t-1}^\top, X_t^\top)^\top\) as cross-sectional data on \(2p\) variables.  This does not reduce the problem to the cross-sectional model of \citet{peters2014identifiability}.  The obstruction is twofold: after unfolding, the first time block does not have structural noise with equal noise variance, and the contemporaneous layer need not satisfy the DAG assumption required by the cross-sectional result.  See Appendix~\ref{app:compare} for details.
These observations explain why the cross-sectional point-identification result does not directly extend to the present time-series setting.  In structural VAR models under equal noise variance, different structural matrices can induce the same reduced-form transition matrix and the same reduced-form residual covariance.  Consequently, the appropriate identification target is an observational equivalence class rather than a single graph.  The next section defines this equivalence relation and characterizes it exactly.

\section{Observational Equivalence}\label{sec:obs_equiv}

In \cref{sec:model_desc}, we showed that a structural VAR model
\(
    \mathcal{M}=(A_0,A_1,\sigma),
    ~
    B=I-A_0,
\)
induces the reduced-form representation in Equations \eqref{eq:reduced_form_section2} and \eqref{eq:reduced_cov_section2}. 

\begin{definition}[Admissibility and normalization]
\label{def:admiss}
We call a structural representation \(\mathcal M=(A_0,A_1,\sigma)\) admissible
if \(B=I-A_0\) is invertible, \(\sigma>0\), and
\(
    \rho(B^{-1}A_1)<1 .
\)
It is called normalized if, in addition,
\(
    \operatorname{diag}(B)=\mathbf 1 .
\)
\end{definition}

The normalized representations are the structural VAR models introduced in
\cref{sec:model_desc}, where \(\operatorname{diag}(B)=\mathbf 1\) is equivalent to
the convention \(\operatorname{diag}(A_0)=0\).  For the equivalence
characterization, one can consider either admissible representations that are
not necessarily normalized, or the normalized equivalence class which is obtained by
intersecting the full equivalence class with the normalization constraint.

Because the reduced-form process is a stable Gaussian VAR(1), \cref{eq:reduced_form_section2} has a unique stationary Gaussian distribution with covariance \(\Sigma_X\), and the pair \((\Phi,\Sigma_u)\), together with the mean determines the full stationary law of the observed time series.  If the observed time series has a nonzero mean, an intercept can be included and removed by centering; for simplicity, we work with the mean-zero formulation.
Indeed, the stationary covariance \(\Sigma_X\) is the unique solution to the Lyapunov equation
\(
    \Sigma_X=\Phi\Sigma_X\Phi^\top+\Sigma_u,
\)
and all finite-dimensional distributions of the stationary Gaussian process are then determined by \(\Phi\) and \(\Sigma_u\).  Therefore, any two structural parameterizations that induce the same pair \((\Phi,\Sigma_u)\) are indistinguishable from observational time-series data alone.

\begin{definition}[Observational equivalence]\label{def:obs_equiv}
Let \(\mathbb{P}_{\mathcal{M}}\) denote the probability law of the stationary observed process \(\{X_t\}_{t\in\mathbb{Z}}\) generated by the structural VAR model \(\mathcal{M}\).  Two admissible models \(\mathcal{M}\) and \(\mathcal{M}'\) are \emph{observationally equivalent}, denoted
\(
    \mathcal{M}\sim_{\mathrm{obs}}\mathcal{M}',
\)
if they induce the same stationary observed process law:
\(
    \mathbb{P}_{\mathcal{M}}=\mathbb{P}_{\mathcal{M}'}.
\)
The observational equivalence class of \(\mathcal{M}\) is
\[
    \mathcal E_{\mathrm{obs}}(\mathcal M)
    :=
    \{\mathcal{M}' : \mathcal{M}'\sim_{\mathrm{obs}}\mathcal{M}\}.
\]
\end{definition}

The equivalence relation above uses all information available in the observational time series: if two models have the same stationary observed process law, then no procedure based only on the observational distribution can distinguish between them.  Conversely, if the induced reduced-form parameters differ, then the stationary Gaussian process laws differ.  Thus, observational equivalence captures exactly the information content of the observational data in this model class.

\begin{proposition}[Reduced-form characterization]\label{prop:obs_eq_reduced_form}
Let $\mathcal{M}=(A_0,A_1,\sigma)$ and $\mathcal{M}'=(A_0',A_1',\sigma')$ be two admissible structural VAR models under equal noise variance.  Define $B=I-A_0$ and $B'=I-A_0'$, and let
\(
    \Phi=B^{-1}A_1,~   
    \Sigma_u=\sigma^2B^{-1}B^{-\top},~
    \Phi'={B'}^{-1}A_1',~
    \Sigma_u'={\sigma'}^2{B'}^{-1}{B'}^{-\top}.
\)
Then $\mathcal{M}\sim_{\mathrm{obs}}\mathcal{M}'$ if and only if
\[
    \Phi=\Phi'
    \text{ and }
    \Sigma_u=\Sigma_u'.
\]
\end{proposition}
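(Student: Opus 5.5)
The proof has two directions. Sufficiency follows from the Gaussian structure. Necessity recovers $(\Phi,\Sigma_u)$ from the second moments of the stationary law.

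\emph{Sufficiency.} Suppose $\Phi=\Phi'$ and $\Sigma_u=\Sigma_u'$. Admissibility gives $\rho(\Phi)<1$, so the Lyapunov equation $\Sigma_X=\Phi\Sigma_X\Phi^\top+\Sigma_u$ has the unique solution
\[
\Sigma_X=\sum_{k\ge 0}\Phi^k\Sigma_u\Phi^{k\top}.
\]
This solution is therefore common to both models. The stationary process is the mean-zero Gaussian process $X_t=\sum_{k\ge0}\Phi^k u_{t-k}$, and its autocovariances are
\[
\Gamma(h):=\operatorname{Cov}(X_{t+h},X_t)=\Phi^h\Sigma_X, \qquad h\ge 0,
\]
with $\Gamma(-h)=\Gamma(h)^\top$. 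Every finite-dimensional distribution of a mean-zero Gaussian process is determined by its covariance function. Hence all finite-dimensional distributions agree, and by Kolmogorov's extension theorem (uniqueness of a law on $(\mathbb R^p)^{\mathbb Z}$ given its finite-dimensional marginals) we get $\mathbb P_{\mathcal M}=\mathbb P_{\mathcal M'}$.

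\emph{Necessity.} Suppose $\mathbb P_{\mathcal M}=\mathbb P_{\mathcal M'}$. Then the two processes have the same marginal covariance $\Sigma_X=\Gamma(0)$ and the same lag-one covariance $\Gamma(1)$. From the stationary representation, $\Gamma(1)=\operatorname{Cov}(\Phi X_t+u_{t+1},X_t)=\Phi\Sigma_X$, using that $u_{t+1}$ is independent of $X_t$, and likewise $\Gamma(1)=\Phi'\Sigma_X$.

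The key point is that $\Sigma_X$ is invertible. Since $\Sigma_u=\sigma^2B^{-1}B^{-\top}$ with $B$ invertible and $\sigma>0$, $\Sigma_u$ is positive definite. The series for $\Sigma_X$ gives $\Sigma_X\succeq\Sigma_u\succ0$. Therefore $\Phi=\Gamma(1)\Sigma_X^{-1}=\Phi'$. Finally, the Lyapunov equation gives
\[
\Sigma_u=\Sigma_X-\Phi\Sigma_X\Phi^\top=\Sigma_X-\Phi'\Sigma_X\Phi'^\top=\Sigma_u'.
\]

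\emph{Where care is needed.} The only nontrivial ingredients are that stability makes the stationary law exist and be unique (so that $\mathbb P_{\mathcal M}$ is well defined), and that $\Sigma_X$ is nonsingular. Both follow directly from admissibility, so the argument is short; the main care is in stating the uniqueness of the stationary law and the reduction to finite-dimensional distributions cleanly.
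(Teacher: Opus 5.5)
Your proposal is correct and follows essentially the same route as the paper: sufficiency because a stable Gaussian VAR(1) law is determined by $(\Phi,\Sigma_u)$, and necessity by recovering $\Phi=\Gamma(1)\Sigma_X^{-1}$ (with $\Sigma_X\succ0$ because $\Sigma_u\succ0$) and then $\Sigma_u=\Sigma_X-\Phi\Sigma_X\Phi^\top$. You are only slightly more explicit than the paper, writing $\Sigma_X$ as the series $\sum_{k\ge0}\Phi^k\Sigma_u\Phi^{k\top}$, using $\Gamma(h)=\Phi^h\Sigma_X$, and appealing to uniqueness of a law given its finite-dimensional distributions.
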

All the proofs are provided in Appendix~\ref{app:proofs}.

The next theorem gives an exact structural characterization of the equivalence class.  The key point is that equal noise variance is preserved by orthogonal transformations of the structural equations and by a common positive scaling.

\begin{theorem}[Orthogonal characterization]\label{thm:obs_eq_orbit}
Let $\mathcal{M}=(A_0,A_1,\sigma)$ and $\mathcal{M}'=(A_0',A_1',\sigma')$ be two admissible structural VAR models under equal noise variance.  Let $B=I-A_0$ and $B'=I-A_0'$.  Then $\mathcal{M}\sim_{\mathrm{obs}}\mathcal{M}'$ if and only if there exist an orthogonal matrix \(Q\in\mathcal{O}(p)=\{Q\in\mathbb{R}^{p\times p}:Q^\top Q=I_p\}\) and a scalar \(c>0\) such that
\begin{equation}
\label{eq:eq_characterization}
    B'=cQB,
    \qquad
    A_1'=cQA_1,
    \qquad
    \sigma'=c\sigma.
\end{equation}
Thus the full observational equivalence class is
\[
    \mathcal E_{\mathrm{obs}}(\mathcal M)
    =
    \left\{
    \left(I-cQB,\;cQA_1,\;c\sigma\right)
    :
    Q\in\mathcal O(p),\ c>0
    \right\}.
\]
\end{theorem}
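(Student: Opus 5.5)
By \cref{prop:obs_eq_reduced_form}, it suffices to show that the pair of equalities $\Phi=\Phi'$ and $\Sigma_u=\Sigma_u'$ holds if and only if \eqref{eq:eq_characterization} holds for some $Q\in\mathcal O(p)$ and $c>0$. The plan is to prove the two directions separately, treating the residual covariance first and then the transition matrix.

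For the ``if'' direction, a direct computation should suffice. Suppose $B'=cQB$, $A_1'=cQA_1$ and $\sigma'=c\sigma$. Then $B'$ is invertible, with ${B'}^{-1}=c^{-1}B^{-1}Q^\top$. This gives
\[
\Phi'={B'}^{-1}A_1'=c^{-1}B^{-1}Q^\top cQA_1=\Phi .
\]
For the residual covariance, $Q^\top Q=I$ gives
\[
\Sigma_u'=c^2\sigma^2\,c^{-1}B^{-1}Q^\top\,c^{-1}QB^{-\top}=\Sigma_u .
\]
Admissibility of $\mathcal M'$ also follows, since $\sigma'>0$ and $\rho(\Phi')=\rho(\Phi)<1$. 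This direction also justifies the stated description of $\mathcal E_{\mathrm{obs}}(\mathcal M)$, once we note that $A_0'=I-B'$.

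For the ``only if'' direction, assume $\Phi=\Phi'$ and $\Sigma_u=\Sigma_u'$. Inverting the covariance equality gives
\[
\sigma^{-2}B^\top B=\sigma'^{-2}{B'}^\top B' .
\]
Set $c:=\sigma'/\sigma>0$ and define $Q:=c^{-1}B'B^{-1}$. Then
\[
Q^\top Q=c^{-2}B^{-\top}{B'}^\top B'B^{-1}=c^{-2}(\sigma'^2/\sigma^2)\,I=I,
\]
so $Q\in\mathcal O(p)$, and by construction $B'=cQB$ and $\sigma'=c\sigma$. Finally, $\Phi=\Phi'$ gives
\[
A_1'=B'\Phi=cQB\,B^{-1}A_1=cQA_1 .
\]

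The main subtlety is the scale $c$. The covariance $\Sigma_u$ determines only the product $\sigma^{-2}B^\top B$, not $B$ and $\sigma$ separately. The plan handles this by fixing $c$ through $\sigma'/\sigma$ rather than trying to recover it from $B$. It is precisely this freedom that the global positive scale in the statement accounts for. Apart from this, the argument is the standard fact that $M^\top M=N^\top N$ with $M$ and $N$ invertible implies $N=QM$ for some orthogonal $Q$, namely $Q=NM^{-1}$.
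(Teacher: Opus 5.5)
Your proposal is correct and follows essentially the same route as the paper. You reduce to equality of $(\Phi,\Sigma_u)$ via \cref{prop:obs_eq_reduced_form}, check the converse by direct computation, and in the forward direction take $c=\sigma'/\sigma$ and $Q=c^{-1}B'B^{-1}$; this is exactly the construction in \cref{lem:gram_orthogonal}, which you inline rather than state separately, and your remark that $\rho(\Phi')=\rho(\Phi)<1$ matches the paper's note that stability is preserved along the class.
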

Note that because \(\Phi=(cQB)^{-1}(cQA_1)=B^{-1}A_1\), stability is automatically preserved along the class.
The orthogonal characterization shows why the structural graph is not point identified in general.  Orthogonal transformations can change the zero pattern of \(B=I-A_0\) and \(A_1\) while preserving the same reduced-form transition matrix \(\Phi\) and the same reduced-form residual covariance \(\Sigma_u\).  Therefore, different contemporaneous and lagged graph structures can induce exactly the same observational law.  Additional assumptions, such as sparsity, are needed to select a preferred representative from the observational equivalence class.  We consider a sparsity objective in Section \ref{sec:learning}.

\begin{remark}
\label{rem:diag_normalization_orbit}
The transformation in \cref{eq:eq_characterization} is a transformation of the structural equations.
If one worked with unnormalized structural equations, the normalized observational equivalence class of \(\mathcal{M}\) is
\begin{equation}
\label{eq:norm_eq_characterization}
\mathcal E_{\mathrm{obs}}^{\mathrm{norm}}(\mathcal M)
=
\left\{
\left(I-cQB,\;cQA_1,\;c\sigma\right)
:
Q\in\mathcal O(p),\ c>0,\ 
\operatorname{diag}(cQB)=\mathbf 1
\right\}.
\end{equation}
\end{remark}

\subsection{Observational Alignment Discrepancy}
\label{subsec:alignment_discrepancy}

The characterization of equivalence class also suggests how to compare two structural models.  A naive Frobenius error such as $\lVert B-B'\rVert_F^2+\lVert A_1-A_1'\rVert_F^2$ is not appropriate for evaluating structural recovery in this setting, because two observationally equivalent representatives can have very different structural matrices.  We therefore compare models only after optimizing over the orthogonal transformations and global positive scalings that generate observational equivalence.

Let
\(
    S=[\,B\;\;A_1\,]\in\mathbb{R}^{p\times 2p},
    ~
    S'=[\,B'\;\;A_1'\,]\in\mathbb{R}^{p\times 2p}.
\)
For a tuning constant \(\eta>0\), define the \emph{observational alignment discrepancy} from \(\mathcal{M}'\) to the observational equivalence class of \(\mathcal{M}\) by
\begin{equation}
\label{eq:obs_align_discrepancy}
\Delta_{\mathrm{align}}^{\mathrm{obs}}(\mathcal M'\mid\mathcal M)
:=
\inf_{\substack{Q\in\mathcal O(p)\\ c>0}}
\left\{
\|S'-cQS\|_F^2
+
\eta(\sigma'-c\sigma)^2
\right\}.
\end{equation}
The constant \(\eta\) controls the relative weight assigned to the noise scale; one may take \(\eta=0\) if concerned only about the alignment between $S$ and $S'$.  
For admissible models, for $\eta>0$, $\Delta_{\mathrm{align}}^{\mathrm{obs}}(\mathcal{M}'\mid\mathcal{M})=0$ if and only if $\mathcal{M}'\sim_{\mathrm{obs}}\mathcal{M}$.  Thus, the discrepancy is equivalence-aware: it assigns zero discrepancy precisely to models that generate the same stationary observed process law.

\begin{remark}
\label{rem:discrepancy_not_metric}
We call \(\Delta_{\mathrm{align}}^{\mathrm{obs}}\) a discrepancy rather than a distance metric.  It is one-sided because it measures the distance from \(\mathcal{M}'\) to the equivalence class generated by \(\mathcal{M}\), and it need not satisfy symmetry or the triangle inequality.  If a symmetric reporting score is desired, one can use, for example,
\(
    \Delta_{\mathrm{sym}}(\mathcal{M},\mathcal{M}')
    =
    \frac{1}{2}
    \left\{
        \Delta_{\mathrm{align}}^{\mathrm{obs}}(\mathcal{M}'\mid\mathcal{M})
        +
        \Delta_{\mathrm{align}}^{\mathrm{obs}}(\mathcal{M}\mid\mathcal{M}')
    \right\}.
\)
\end{remark}

The discrepancy in \cref{eq:obs_align_discrepancy} has a closed-form  solution described below. 

\begin{proposition}[Closed-form]\label{prop:obs_orbit_closed_form}
Let $C=SS'^\top$, and let
\(
    C=U\operatorname{diag}(\gamma_1,\ldots,\gamma_p)V^\top
\)
be a singular value decomposition, with singular values
\(\gamma_1,\ldots,\gamma_p\geq 0\).  Define $\alpha:=\sum_{i=1}^p\gamma_i
    =
    \lVert SS'^\top\rVert_*$, where \(\lVert\cdot\rVert_*\) denotes the nuclear norm.  Then the observational alignment discrepancy is
\begin{equation}
\label{eq:obs_orbit_closed_form}
    \Delta_{\mathrm{align}}^{\mathrm{obs}}(\mathcal{M}'\mid \mathcal{M})
    =
    \lVert S'\rVert_F^2+\eta{\sigma'}^2
    -
    \frac{
        \left(\alpha+\eta\sigma\sigma'\right)^2
    }{
        \lVert S\rVert_F^2+\eta\sigma^2
    }.
\end{equation}
An optimizer is given by
\begin{equation*}
    Q^\star=VU^\top,
    \qquad
    c^\star=
    \frac{
        \alpha+\eta\sigma\sigma'
    }{
        \lVert S\rVert_F^2+\eta\sigma^2
    }.
\end{equation*}
If \(SS'^\top\) has repeated or zero singular values, the optimizer \(Q^\star\) may not be unique, but the optimal value in \cref{eq:obs_orbit_closed_form} is unique.
\end{proposition}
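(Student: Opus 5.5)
We expand the objective and optimize in two stages: first over $Q$ for fixed $c$, then over $c>0$. Write
\[
\|S'-cQS\|_F^2+\eta(\sigma'-c\sigma)^2
=\|S'\|_F^2+\eta{\sigma'}^2-2c\bigl(\operatorname{tr}(S'^\top QS)+\eta\sigma\sigma'\bigr)+c^2\bigl(\|S\|_F^2+\eta\sigma^2\bigr).
\]
Here we used $\|QS\|_F=\|S\|_F$ for orthogonal $Q$. For any fixed $c>0$, only the cross term depends on $Q$. Because $c>0$, minimizing over $Q$ is equivalent to maximizing $\operatorname{tr}(S'^\top QS)=\operatorname{tr}(QSS'^\top)=\operatorname{tr}(QC)$ over $\mathcal O(p)$.

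This is the orthogonal Procrustes step. With $C=U\operatorname{diag}(\gamma)V^\top$ we get
\[
\operatorname{tr}(QC)=\operatorname{tr}(V^\top QU\operatorname{diag}(\gamma))=\sum_i (V^\top QU)_{ii}\gamma_i .
\]
Since $V^\top QU$ is orthogonal, its diagonal entries are bounded by $1$ in absolute value. Hence $\operatorname{tr}(QC)\le\sum_i\gamma_i=\alpha$, with equality at $Q=VU^\top$, where $V^\top QU=I$. Because the maximizer does not depend on $c$, the joint infimum reduces to a one-dimensional problem:
\[
\inf_{c>0}\Bigl\{\|S'\|_F^2+\eta{\sigma'}^2-2c(\alpha+\eta\sigma\sigma')+c^2(\|S\|_F^2+\eta\sigma^2)\Bigr\}.
\]

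This is a convex quadratic in $c$, since the leading coefficient $\|S\|_F^2+\eta\sigma^2>0$ (because $\sigma>0$ and $\eta>0$, or because $B$ is invertible). Its unconstrained minimizer is the stated $c^\star$. The main subtlety is the constraint $c>0$. We have $c^\star\ge 0$ because $\alpha\ge0$ and $\sigma,\sigma'>0$. Moreover $c^\star>0$ whenever $\eta>0$; when $\eta=0$ it is still positive because $\alpha=\|SS'^\top\|_*>0$. To see this, note $SS'^\top$ contains the summand $BB'^\top$ plus $A_1A_1'^\top$. A cleaner argument is that $\alpha\ge\operatorname{tr}(Q SS'^\top)$ for every orthogonal $Q$, and one can argue that $C\neq0$. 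Even if $c^\star=0$ in a degenerate case, the infimum over $c>0$ still equals the value at $c=0$ by continuity, so the formula holds regardless. Substituting $c^\star$ gives the value $\|S'\|_F^2+\eta{\sigma'}^2-(\alpha+\eta\sigma\sigma')^2/(\|S\|_F^2+\eta\sigma^2)$.

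Finally, non-uniqueness of $Q^\star$ under repeated or zero singular values only reflects non-uniqueness of the SVD factors. Every choice attains $\operatorname{tr}(QC)=\alpha$, and $\alpha$ is intrinsic as the nuclear norm, so the optimal value is unique. The step requiring the most care is the Procrustes bound together with the positivity of $c^\star$; everything else is algebra.
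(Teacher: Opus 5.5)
Your proposal is correct and follows the paper's proof essentially step for step: expand the objective, maximize $\operatorname{tr}(QC)$ at $Q^\star=VU^\top$, then minimize the resulting convex quadratic in $c$. The differences are that you prove the Procrustes bound directly via $|(V^\top QU)_{ii}|\le 1$ rather than citing von Neumann's trace inequality, and you also treat $\eta=0$, which the paper's proof does not (its definition takes $\eta>0$); in that extra case your claim $\alpha>0$ was left unproved, and it closes as follows: $SS'^\top=B(I+\Phi\Phi'^\top)B'^\top$, and $\Phi\Phi'^\top=-I$ would force $|\det\Phi|\,|\det\Phi'|=1$, contradicting $\rho(\Phi),\rho(\Phi')<1$.
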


The closed-form expression in \cref{prop:obs_orbit_closed_form} will be used in our simulation studies in \cref{sec:results} as an equivalence-aware evaluation criterion: it measures the error of a learned structural VAR model relative to the true observational equivalence class, rather than relative to one arbitrary representative.

\begin{remark}[Normalized-class version]\label{rem:normalized_align_discrepancy}
The discrepancy in \cref{eq:obs_align_discrepancy} optimizes over the full orthogonal-scale class of the structural equations.  If one wants the literal Euclidean distance to the normalized equivalence class in \cref{eq:norm_eq_characterization}, one can impose the additional constraint $\operatorname{diag}(cQB)=\mathbf 1$ inside the infimum.  This constrained version measures distance to admissible normalized representatives, but it does not have the simple closed form in \cref{eq:obs_orbit_closed_form}.  
\end{remark}

\subsection{Scale-Free Observational Equivalence}
\label{subsec:scale_free_equiv}

In some applications, the absolute scale of the observed process is less important than the causal structure and relative dependence pattern.  We therefore also consider a coarser, \emph{scale-free} notion of equivalence.  Let \(\{X_t^{\mathcal M}\}_{t\in\mathbb Z}\) and \(\{X_t^{\mathcal M'}\}_{t\in\mathbb Z}\) denote the stationary observed processes generated by \(\mathcal M\) and \(\mathcal M'\), respectively.  We say that two admissible models are \emph{scale-free observationally equivalent}, denoted
\(
    \mathcal M\sim_{\mathrm{sf}}\mathcal M',
\)
if there exists \(a>0\) such that
\[
    \{X_t^{\mathcal M'}\}_{t\in\mathbb Z}
    \overset{d}{=}
    \{\sqrt a\,X_t^{\mathcal M}\}_{t\in\mathbb Z}.
\]
Equivalently, every finite-dimensional distribution under \(\mathcal M'\) is obtained from the corresponding finite-dimensional distribution under \(\mathcal M\) by multiplying all observed variables by the same positive constant \(\sqrt a\).  Thus, \(\sim_{\mathrm{sf}}\) ignores a common multiplicative scale of the observed time series, but still requires the same temporal dynamics and covariance structure up to a global scale factor.  For example, in neuroimaging applications, global signal scale can vary across subjects, sessions, scanners, or preprocessing pipelines, while the scientific target is often the pattern of effective connectivity rather than the absolute amplitude scale.
\begin{proposition}[Scale-free reduced-form characterization]
\label{prop:sf_reduced_form}
Let \(\mathcal M=(A_0,A_1,\sigma)\) and \(\mathcal M'=(A_0',A_1',\sigma')\) be admissible models, with reduced-form parameters \((\Phi,\Sigma_u)\) and \((\Phi',\Sigma_u')\), respectively.  Then
\(
    \mathcal M\sim_{\mathrm{sf}}\mathcal M'
\)
if and only if
\[
    \Phi=\Phi'
    ~\text{and}~
    \Sigma_u'=a\Sigma_u
\]
for some \(a>0\).
\end{proposition}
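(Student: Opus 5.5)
The plan is to reduce the scale-free statement to the reduced-form characterization of \cref{prop:obs_eq_reduced_form}. The key observation is that scaling a stationary VAR(1) process by \(\sqrt a\) gives another stationary Gaussian VAR(1) process with the same transition matrix and residual covariance multiplied by \(a\). Concretely, if \(X_t=\Phi X_{t-1}+u_t\) with \(u_t\sim\mathcal N(0,\Sigma_u)\) i.i.d., then \(Y_t:=\sqrt a X_t\) satisfies \(Y_t=\Phi Y_{t-1}+\sqrt a\,u_t\), where \(\sqrt a\,u_t\sim\mathcal N(0,a\Sigma_u)\) is i.i.d. and independent of the past of \(Y\). Hence \(\{Y_t\}\) is the stationary solution of the stable VAR(1) with parameters \((\Phi,a\Sigma_u)\). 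Its stationary covariance is \(a\Sigma_X\), which solves the Lyapunov equation with \(a\Sigma_u\).

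Moreover, \((\Phi,a\Sigma_u)\) is realized by an admissible structural model, namely \((I-\sqrt a B+\ldots)\)-type rescalings. More precisely, take \(\mathcal M^a=(I-B,A_1,\sqrt a\,\sigma)\), i.e. the same \(B,A_1\) with noise scale \(\sqrt a\sigma\). This model has reduced form \((B^{-1}A_1,a\sigma^2B^{-1}B^{-\top})=(\Phi,a\Sigma_u)\) and is admissible because stability is unchanged. Its stationary law is therefore that of \(\{\sqrt a X_t^{\mathcal M}\}\), by uniqueness of the stationary solution.

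For the ``if'' direction: suppose \(\Phi'=\Phi\) and \(\Sigma_u'=a\Sigma_u\). Then \(\mathcal M'\) and \(\mathcal M^a\) have identical reduced-form parameters, so by \cref{prop:obs_eq_reduced_form} they induce the same stationary law. That law is the law of \(\{\sqrt a X^{\mathcal M}_t\}\), so \(\mathcal M\sim_{\mathrm{sf}}\mathcal M'\).

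For the ``only if'' direction: suppose \(\{X^{\mathcal M'}_t\}\overset d=\{\sqrt a X^{\mathcal M}_t\}\). The right side is the stationary law of \(\mathcal M^a\), so \(\mathcal M'\sim_{\mathrm{obs}}\mathcal M^a\). \Cref{prop:obs_eq_reduced_form} then gives \(\Phi'=\Phi\) and \(\Sigma_u'=a\Sigma_u\).

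The main step needing care is confirming that the rescaled process is genuinely the stationary law of the model \(\mathcal M^a\), so that \cref{prop:obs_eq_reduced_form} applies. I would justify this directly, as a fallback, using the stationary second moments: \(\operatorname{Cov}(X_t)=\Sigma_X\) and \(\operatorname{Cov}(X_{t+1},X_t)=\Phi\Sigma_X\). Since \(\Sigma_X\succ0\) (because \(\Sigma_u\succ0\)), \(\Phi\) is recovered as \(\operatorname{Cov}(X_{t+1},X_t)\Sigma_X^{-1}\). This expression is invariant under scaling, which gives \(\Phi'=\Phi\). Then \(\Sigma_u=\Sigma_X-\Phi\Sigma_X\Phi^\top\) scales by exactly \(a\), which gives \(\Sigma_u'=a\Sigma_u\).
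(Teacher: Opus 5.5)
Your proof is correct and is essentially the paper's argument. The paper proves the forward direction exactly as in your fallback paragraph: scaling gives $\Gamma_0'=a\Gamma_0$ and $\Gamma_1'=a\Gamma_1$, hence $\Phi'=\Gamma_1'(\Gamma_0')^{-1}=\Phi$ and $\Sigma_u'=\Gamma_0'-\Phi\Gamma_0'\Phi^\top=a\Sigma_u$. It proves the converse by noting that $\sqrt a\,X_t^{\mathcal M}$ is the stationary VAR(1) with parameters $(\Phi,a\Sigma_u)=(\Phi',\Sigma_u')$; routing both directions through the auxiliary admissible model $(A_0,A_1,\sqrt a\,\sigma)$ and Proposition~\ref{prop:obs_eq_reduced_form} just repackages these same two facts.

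One small fix: delete the garbled ``$(I-\sqrt a B+\ldots)$-type rescalings'' phrase, since the model you actually use only rescales the noise.
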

\begin{theorem}[Scale-free orthogonal characterization]
\label{thm:sf_orbit}
Let \(B=I-A_0\) and \(B'=I-A_0'\).  Then
\(
    \mathcal M\sim_{\mathrm{sf}}\mathcal M'
\)
if and only if there exist \(Q\in\mathcal O(p)\) and \(c>0\) such that
\begin{equation}
\label{eq:sf_orbit_characterization}
    B'=cQB,
    \qquad
    A_1'=cQA_1 .
\end{equation}
That is, the scale free observational equivalence class is
\[
\mathcal E_{\mathrm{sf}}(\mathcal M)
=
\left\{
\left(I-cQB,\;cQA_1,\;\tau\right)
:
Q\in\mathcal O(p),\ c>0,\ \tau>0
\right\}.
\]
\end{theorem}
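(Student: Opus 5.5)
The plan is to reduce to \cref{prop:sf_reduced_form}, in the same way \cref{thm:obs_eq_orbit} is presumably derived from \cref{prop:obs_eq_reduced_form}. By that proposition, $\mathcal M\sim_{\mathrm{sf}}\mathcal M'$ holds if and only if $\Phi=\Phi'$ and $\Sigma_u'=a\Sigma_u$ for some $a>0$. So it suffices to show that these reduced-form conditions are equivalent to the existence of $Q\in\mathcal O(p)$ and $c>0$ with $B'=cQB$ and $A_1'=cQA_1$.

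\emph{Sufficiency.} Suppose $B'=cQB$ and $A_1'=cQA_1$. Then
\[
\Phi'=(cQB)^{-1}cQA_1=B^{-1}A_1=\Phi .
\]
For the residual covariance,
\[
\Sigma_u'={\sigma'}^2 (cQB)^{-1}(cQB)^{-\top}={\sigma'}^2c^{-2}B^{-1}Q^\top Q B^{-\top}=\frac{{\sigma'}^2}{c^2\sigma^2}\,\Sigma_u ,
\]
so we may take $a={\sigma'}^2/(c^2\sigma^2)>0$. Since $\sigma'$ is arbitrary here, the noise scale $\tau$ in the class description is free.

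\emph{Necessity.} Suppose $\Phi=\Phi'$ and $\Sigma_u'=a\Sigma_u$. The second condition reads
\[
{\sigma'}^2{B'}^{-1}{B'}^{-\top}=a\sigma^2B^{-1}B^{-\top}.
\]
Inverting both sides gives
\[
{B'}^\top B'=c^2B^\top B, \qquad c:=\frac{\sigma'}{\sqrt a\,\sigma}>0 .
\]
Now define $Q:=c^{-1}B'B^{-1}$, which is well defined because $B$ is invertible by admissibility. A direct computation gives
\[
Q^\top Q=c^{-2}B^{-\top}{B'}^\top B'B^{-1}=I ,
\]
so $Q$ is orthogonal and $B'=cQB$. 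Finally, from $\Phi'=\Phi$ we get
\[
A_1'=B'\Phi'=cQB\,B^{-1}A_1=cQA_1 .
\]

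The class description then follows immediately. Every element of the form $(I-cQB,\,cQA_1,\,\tau)$ is admissible: $cQB$ is invertible, $\tau>0$, and the transition matrix is unchanged, so stability is inherited. The only genuinely nontrivial step is the polar-type argument that ${B'}^\top B'=c^2B^\top B$ forces $B'=cQB$, and this is settled by the explicit construction of $Q$ above. I do not anticipate any real obstacle. The one point needing care is the exact bookkeeping of $c$ in terms of $a$, $\sigma$ and $\sigma'$: the scale $c$ absorbs $\sigma'/(\sqrt a\,\sigma)$, which is why no constraint on $\sigma'$ survives, in contrast with \cref{thm:obs_eq_orbit}.
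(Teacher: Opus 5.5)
Your proposal is correct and follows the paper's proof essentially step for step. You reduce to \cref{prop:sf_reduced_form}, invert the covariance relation to get ${B'}^\top B'=c^2B^\top B$ with $c=\sigma'/(\sqrt a\,\sigma)$, construct $Q=c^{-1}B'B^{-1}$ (exactly what \cref{lem:gram_orthogonal} does), recover $A_1'=cQA_1$ from $\Phi'=\Phi$, and handle the converse with $a={\sigma'}^2/(c^2\sigma^2)$, just as the paper does.
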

The proofs of \cref{prop:sf_reduced_form,thm:sf_orbit} are the same as those of \cref{prop:obs_eq_reduced_form,thm:obs_eq_orbit}, replacing equality of \(\Sigma_u\) by proportionality of \(\Sigma_u\).

For admissible normalized representatives, this representation also satisfies
\(
    \operatorname{diag}(cQB)=\mathbf 1 .
\)
Unlike \cref{thm:obs_eq_orbit}, no condition relating \(\sigma'\) to \(c\sigma\) is imposed.  Indeed, under \cref{eq:sf_orbit_characterization},
\(
    \Sigma_u'
    =
    \frac{\tau^2}{c^2\sigma^2}\Sigma_u ,
\)
so the two models induce the same reduced-form covariance shape, but possibly with a different global scale, and we have
\[
\mathcal E_{\mathrm{sf}}^{\mathrm{norm}}(\mathcal M)
=
\left\{
\left(I-cQB,\;cQA_1,\;\tau\right)
:
Q\in\mathcal O(p),\ c>0,\ \tau>0,\ 
\operatorname{diag}(cQB)=\mathbf 1
\right\}.
\]

This coarser equivalence leads to the \emph{scale-free observational alignment discrepancy}
\begin{equation}
\label{eq:sf_align_discrepancy}
    \Delta_{\mathrm{align}}^{\mathrm{sf}}(\mathcal M'\mid\mathcal M)
:=
\inf_{\substack{Q\in\mathcal O(p)\\ c>0}}
\|S'-cQS\|_F^2,
    \qquad
    S=[\,B\;\;A_1\,],\quad S'=[\,B'\;\;A_1'\,].
\end{equation}
We have
\(
\Delta_{\mathrm{align}}^{\mathrm{obs}}(\mathcal M'\mid\mathcal M)=0
\Longleftrightarrow
\mathcal M'\in \mathcal E_{\mathrm{obs}}(\mathcal M),
\)
and
\(
\Delta_{\mathrm{align}}^{\mathrm{sf}}(\mathcal M'\mid\mathcal M)=0
\Longleftrightarrow
\mathcal M'\in \mathcal E_{\mathrm{sf}}(\mathcal M).
\)

\begin{proposition}[Closed form for the scale-free discrepancy]
\label{prop:sf_orbit_closed_form}
Let \(C=SS'^\top\), and let
\(
    C=U\operatorname{diag}(\gamma_1,\ldots,\gamma_p)V^\top
\)
be a singular value decomposition.  Define
\(
    \alpha:=\sum_{i=1}^p\gamma_i=\|SS'^\top\|_* .
\)
Then
\begin{equation*}
    \Delta_{\mathrm{align}}^{\mathrm{sf}}(\mathcal M'\mid\mathcal M)
    =
    \|S'\|_F^2
    -
    \frac{\alpha^2}{\|S\|_F^2}.
\end{equation*}
When \(\alpha>0\), an optimizer is
\[
    Q^\star=VU^\top,
    \qquad
    c^\star=\frac{\alpha}{\|S\|_F^2}.
\]
If \(\alpha=0\), the same formula gives the infimum, approached as \(c\downarrow0\).
\end{proposition}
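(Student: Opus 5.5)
The plan is to separate the optimization over $Q$ from the optimization over $c$, handling $Q$ first for fixed $c$. Expanding the objective in \cref{eq:sf_align_discrepancy} and using orthogonality of $Q$ gives
\[
\|S'-cQS\|_F^2=\|S'\|_F^2-2c\,\operatorname{tr}\!\big(S'^\top QS\big)+c^2\|QS\|_F^2
=\|S'\|_F^2-2c\,\operatorname{tr}\!\big(Q\,SS'^\top\big)+c^2\|S\|_F^2 .
\]
The identity $\|QS\|_F=\|S\|_F$ holds for every $Q\in\mathcal O(p)$, and the cross term rewrites as $\operatorname{tr}(S'^\top QS)=\operatorname{tr}(QSS'^\top)=\operatorname{tr}(QC)$. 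Since $c>0$, minimizing over $Q$ for fixed $c$ is the same as maximizing $\operatorname{tr}(QC)$ over $\mathcal O(p)$. This is the orthogonal Procrustes problem.

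The Procrustes step is the only nontrivial one. Write $C=U\operatorname{diag}(\gamma)V^\top$ and set $W:=V^\top QU$, which is orthogonal. Then
\[
\operatorname{tr}(QC)=\operatorname{tr}\big(W\operatorname{diag}(\gamma)\big)=\sum_i W_{ii}\gamma_i\le\sum_i\gamma_i=\alpha,
\]
because every entry of an orthogonal matrix satisfies $|W_{ii}|\le 1$ and every $\gamma_i\ge0$. Equality holds at $W=I$, that is, at $Q^\star=VU^\top$. So the maximum is $\alpha=\|C\|_*$, and it does not depend on $c$. This is the same computation that underlies \cref{prop:obs_orbit_closed_form}, and I would cite it or repeat it briefly.

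Substituting the maximum leaves a scalar problem:
\[
\inf_{c>0}\; f(c):=\|S'\|_F^2-2c\alpha+c^2\|S\|_F^2 .
\]
Admissibility makes $B$ invertible, so $\|S\|_F^2>0$ and $f$ is a strictly convex quadratic. Its unconstrained minimizer is $c^\star=\alpha/\|S\|_F^2$, which is nonnegative. The cases then split as follows.
\begin{itemize}
\item If $\alpha>0$, then $c^\star>0$ is feasible. The minimum value is $f(c^\star)=\|S'\|_F^2-\alpha^2/\|S\|_F^2$, attained at $(Q^\star,c^\star)$.
\item If $\alpha=0$, then $f(c)=\|S'\|_F^2+c^2\|S\|_F^2$ is strictly increasing on $(0,\infty)$. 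Its infimum over $c>0$ is therefore $\|S'\|_F^2$, approached as $c\downarrow0$ and not attained. This agrees with the stated formula evaluated at $\alpha=0$.
\end{itemize}

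The joint infimum equals the iterated one, $\inf_c\inf_Q$, because the $Q$-step gives the same bound uniformly in $c$. This completes the argument. The only step needing care is the Procrustes bound, together with the boundary case $\alpha=0$ caused by the open constraint $c>0$.
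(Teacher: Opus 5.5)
Your proposal is correct and follows essentially the same route as the paper: expand the Frobenius norm, maximize $\operatorname{tr}(QC)$ over $\mathcal O(p)$ to obtain $\alpha$ at $Q^\star=VU^\top$, then minimize the resulting convex quadratic in $c$ with the case split $\alpha>0$ versus $\alpha=0$. The only difference is that you prove the trace bound directly, via $W=V^\top QU$ and $|W_{ii}|\le 1$, whereas the paper simply cites von Neumann's trace inequality.
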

The scale-free alignment discrepancy in \cref{eq:sf_align_discrepancy} ignores global scale mismatch, while the alignment discrepancy in \cref{eq:obs_align_discrepancy} evaluates recovery of the exact observational law.

\section{Causal Structural Learning with \textsc{ENVAR}}\label{sec:learning}

The results of \cref{sec:obs_equiv} show that the observational data identify the reduced-form parameters \((\Phi,\Sigma_u)\), but they do not generally identify a unique structural pair \((A_0,A_1)\).  The goal of this section is to use the observational equivalence characterization to construct a sparse representative of the equivalence class.  We refer to the resulting procedure as \textsc{ENVAR}, short for Equal-Noise VAR causal discovery.

The procedure has three main steps.  First, we estimate the reduced-form VAR parameters from the observed time series.  
Second, we construct a canonical, possibly unnormalized, representative of the empirical observational equivalence class.
Third, we search over this equivalence class for a sparse representative satisfying, or approximately satisfying, the diagonal normalization \(\operatorname{diag}(B)=\mathbf 1\).

\subsection{Estimating the Reduced Form}

Suppose we observe a mean-zero time series \(\{X_t\}_{t=1}^T\).  If the data have nonzero empirical mean, we first center them.  Let
\[
    n=T-1,
    \qquad
    Y=\begin{bmatrix}X_2 & X_3 & \cdots & X_T\end{bmatrix}
    \in\mathbb{R}^{p\times n},
    \qquad
    Z=\begin{bmatrix}X_1 & X_2 & \cdots & X_{T-1}\end{bmatrix}
    \in\mathbb{R}^{p\times n}.
\]
The reduced-form VAR(1) model is
\(
    X_t=\Phi X_{t-1}+u_t,
    ~
    \operatorname{Cov}(u_t)=\Sigma_u.
\)
Assuming \(ZZ^\top\) is nonsingular, the ordinary least-squares estimator is
\(
    \widehat{\Phi}
    =
    YZ^\top(ZZ^\top)^{-1}.
\)
Let $\widehat U=Y-\widehat\Phi Z$ denote the matrix of reduced-form residuals.  We estimate the reduced-form residual covariance by
\(
    \widehat\Sigma_u
    =
    \frac{1}{n}\widehat U\widehat U^\top.
\)
For the remainder of this section, we assume that \(\widehat\Sigma_u\) is positive definite.  If \(\widehat\Sigma_u\) is singular or ill-conditioned, one may replace it by a regularized estimate, for example \(\widehat\Sigma_u+\tau I_p\) with a small \(\tau>0\).

\subsection{A Canonical Representative of the Empirical Equivalence Class}

In this subsection, we construct a convenient base point for the set of structural representations that induce the empirical reduced-form parameters $(\widehat\Phi,\widehat\Sigma_u)$. 
Define the empirical precision matrix
\(
    \widehat\Omega_u:=\widehat\Sigma_u^{-1}.
\)
Let \(\widehat B_{\mathrm{can}}\) be any matrix satisfying
\begin{equation}\label{eq:B_can_def}
    \widehat B_{\mathrm{can}}^\top \widehat B_{\mathrm{can}}
    =
    \widehat\Omega_u .
\end{equation}
A natural deterministic choice is the upper-triangular Cholesky factor of \(\widehat\Omega_u\).
We then define
\begin{equation}\label{eq:Gamma_can_def}
    \widehat\Gamma_{\mathrm{can}}
    :=
    \widehat B_{\mathrm{can}}\widehat\Phi .
\end{equation}
The following proposition formalizes why
\(
    \widehat{\mathcal M}_{\mathrm{can}}
    :=
    (
        I-\widehat B_{\mathrm{can}},
        \widehat\Gamma_{\mathrm{can}},
        1
    )
\)
is a canonical unnormalized structural representation of the empirical reduced-form parameters.

\begin{proposition}
\label{prop:canonical_empirical_orbit}
Let \(\widehat\Phi\in\mathbb{R}^{p\times p}\) and \(\widehat\Sigma_u\succ0\) be given.  Let \(\widehat B_{\mathrm{can}}\) and \(\widehat\Gamma_{\mathrm{can}}\) satisfy \cref{eq:B_can_def} and \cref{eq:Gamma_can_def}, respectively.  Then the following statements hold.
\begin{enumerate}
    \item The unnormalized structural representation
    \(
        \widehat{\mathcal M}_{\mathrm{can}}
        =
        (
            I-\widehat B_{\mathrm{can}},
            \widehat\Gamma_{\mathrm{can}},
            1
        )
    \)
    induces the empirical reduced-form parameters:
    \(
        \widehat B_{\mathrm{can}}^{-1}
        \widehat\Gamma_{\mathrm{can}}
        =
        \widehat\Phi
    \text{ and }
        \widehat B_{\mathrm{can}}^{-1}
        \widehat B_{\mathrm{can}}^{-\top}
        =
        \widehat\Sigma_u .
    \)

    \item Let
    \(
        \widetilde{\mathcal M}
        =
        (\widetilde A_0,\widetilde A_1,\widetilde\sigma)
    \)
    be any structural representation with
    \(
        \widetilde B:=I-\widetilde A_0
    \)
    invertible and \(\widetilde\sigma>0\).  Then \(\widetilde{\mathcal M}\) induces the same empirical reduced-form parameters,
    \(
        \widetilde B^{-1}\widetilde A_1
        =
        \widehat\Phi,
        ~
        \widetilde\sigma^2
        \widetilde B^{-1}\widetilde B^{-\top}
        =
        \widehat\Sigma_u,
    \)
    if and only if there exist \(Q\in\mathcal O(p)\) and \(c>0\) such that
    \(
        \widetilde B
        =
        cQ\widehat B_{\mathrm{can}},
        ~
        \widetilde A_1
        =
        cQ\widehat\Gamma_{\mathrm{can}},
        ~
        \widetilde\sigma
        =
        c .
    \)
\end{enumerate}
\end{proposition}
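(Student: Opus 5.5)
Part 1 is a direct computation. Since \(\widehat\Omega_u=\widehat\Sigma_u^{-1}\succ0\), any \(\widehat B_{\mathrm{can}}\) satisfying \cref{eq:B_can_def} is invertible, because \(\det(\widehat B_{\mathrm{can}})^2=\det\widehat\Omega_u>0\). Hence \(\widehat B_{\mathrm{can}}^{-1}\widehat\Gamma_{\mathrm{can}}=\widehat B_{\mathrm{can}}^{-1}\widehat B_{\mathrm{can}}\widehat\Phi=\widehat\Phi\). Inverting \cref{eq:B_can_def} gives \(\widehat B_{\mathrm{can}}^{-1}\widehat B_{\mathrm{can}}^{-\top}=(\widehat B_{\mathrm{can}}^\top\widehat B_{\mathrm{can}})^{-1}=\widehat\Omega_u^{-1}=\widehat\Sigma_u\). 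No stability is needed here, since the statement concerns only the algebraic reduced-form map.

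For Part 2, the ``if'' direction is a substitution. If \(\widetilde B=cQ\widehat B_{\mathrm{can}}\), \(\widetilde A_1=cQ\widehat\Gamma_{\mathrm{can}}\) and \(\widetilde\sigma=c\), then \(\widetilde B\) is invertible. Moreover \(\widetilde B^{-1}\widetilde A_1=\widehat B_{\mathrm{can}}^{-1}Q^\top Q\widehat\Gamma_{\mathrm{can}}=\widehat\Phi\), and
\[
\widetilde\sigma^2\widetilde B^{-1}\widetilde B^{-\top}=c^2\cdot c^{-2}\widehat B_{\mathrm{can}}^{-1}Q^\top Q\widehat B_{\mathrm{can}}^{-\top}=\widehat\Sigma_u
\]
by Part 1.

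The ``only if'' direction is the substantive step. I will mirror the argument behind \cref{thm:obs_eq_orbit}, but at the level of reduced-form parameters, because the canonical model and \(\widetilde{\mathcal M}\) need not be stable and \cref{prop:obs_eq_reduced_form} is not needed. Suppose \(\widetilde\sigma^2\widetilde B^{-1}\widetilde B^{-\top}=\widehat\Sigma_u\). Inverting gives \(\widetilde\sigma^{-2}\widetilde B^\top\widetilde B=\widehat\Omega_u=\widehat B_{\mathrm{can}}^\top\widehat B_{\mathrm{can}}\). Set \(c:=\widetilde\sigma>0\) and \(Q:=c^{-1}\widetilde B\widehat B_{\mathrm{can}}^{-1}\). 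Then
\[
Q^\top Q=c^{-2}\widehat B_{\mathrm{can}}^{-\top}\widetilde B^\top\widetilde B\,\widehat B_{\mathrm{can}}^{-1}=\widehat B_{\mathrm{can}}^{-\top}\widehat B_{\mathrm{can}}^\top\widehat B_{\mathrm{can}}\widehat B_{\mathrm{can}}^{-1}=I_p,
\]
so \(Q\in\mathcal O(p)\) and \(\widetilde B=cQ\widehat B_{\mathrm{can}}\). Finally, \(\widetilde A_1=\widetilde B\widehat\Phi=cQ\widehat B_{\mathrm{can}}\widehat\Phi=cQ\widehat\Gamma_{\mathrm{can}}\).

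The only point requiring care is this key step: two factorizations of the same positive definite Gram matrix, \(M^\top M=N^\top N\) with \(N\) invertible, differ by a left orthogonal factor, \(M=(MN^{-1})N\). I will also note that the scale \(c\) is forced to equal \(\widetilde\sigma\) because the canonical representative has unit noise level. With this observation the proof is routine.
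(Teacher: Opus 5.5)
Your proposal is correct and follows essentially the same route as the paper. Part 1 and the ``if'' direction are the same direct computations. For the ``only if'' direction, your definition \(Q:=\widetilde\sigma^{-1}\widetilde B\widehat B_{\mathrm{can}}^{-1}\) inlines exactly the Gram--orthogonal factorization (Lemma~\ref{lem:gram_orthogonal}) that the paper invokes with \(C=\widehat B_{\mathrm{can}}\), \(D=\widetilde B\), \(\lambda=\widetilde\sigma^2\), and you then recover \(\widetilde A_1=\widetilde B\widehat\Phi=cQ\widehat\Gamma_{\mathrm{can}}\) just as the paper does.
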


By \cref{prop:canonical_empirical_orbit}, the full unnormalized empirical observational equivalence class generated by the canonical representative is
\begin{equation}
 \label{eq:empirical_equiv_class}
    \widehat{\mathcal E}_{\mathrm{obs}}
    :=
    \left\{
    \left(
        I-cQ\widehat B_{\mathrm{can}},
        \;
        cQ\widehat\Gamma_{\mathrm{can}},
        \;
        c
    \right)
    :
    Q\in\mathcal O(p),\ c>0
    \right\}.
\end{equation}
Equivalently, writing the structural representation in terms of \(B=I-A_0\), every member of this class has
\begin{equation*}
    \widehat B(Q,c)
    :=
    cQ\widehat B_{\mathrm{can}},
    \qquad
    \widehat A_1(Q,c)
    :=
    cQ\widehat\Gamma_{\mathrm{can}},
    \qquad
    \widehat\sigma(Q,c)
    :=
    c .
\end{equation*}
All triples in \cref{eq:empirical_equiv_class} induce the same empirical reduced-form parameters \((\widehat\Phi,\widehat\Sigma_u)\).
The normalized empirical representatives, if they exist, are the members satisfying
\(
    \operatorname{diag}\!\left(cQ\widehat B_{\mathrm{can}}\right)=\mathbf 1 .
\)

\begin{remark}
The matrix \(\widehat B_{\mathrm{can}}\) is not unique.  Any two matrices satisfying
\(
    B_1^\top B_1
    =
    B_2^\top B_2
    =
    \widehat\Sigma_u^{-1}
\)
differ by a left orthogonal transformation.  Therefore, choosing the upper-triangular Cholesky factor simply fixes a convenient base point for the same empirical equivalence class. 
\end{remark}

\subsection{Sparse Representative Selection}

Observational equivalence alone identifies an equivalence class, not a unique graph.  To select a representative, we use sparsity as an additional modeling principle.  The diagonal entries of \(A_0\) are controlled by the normalization \(\operatorname{diag}(A_0)=0\), equivalently \(\operatorname{diag}(B)=\mathbf 1\), and are therefore not included in the sparsity penalty.  We penalize the off-diagonal entries of \(A_0\), which encode contemporaneous effects between distinct variables, and all entries of \(A_1\), which encode lagged effects.
Let \(\operatorname{offdiag}(M)\) be the matrix obtained from \(M\) by setting its diagonal entries to zero, and let \(\|M\|_1=\sum_{i,j}|M_{ij}|\) denote the entrywise \(\ell_1\) norm.  The ideal constrained sparse representative is obtained by solving
\begin{equation}\label{eq:envar_constrained}
\begin{aligned}
    (\widehat Q,\widehat c)
    \in
    \arg\min_{\substack{Q\in\mathcal O(p)\\ c>0}}
    \quad
    &
    \lambda_0
    \left\|
        \operatorname{offdiag}\!\left(I-cQ\widehat B_{\mathrm{can}}\right)
    \right\|_1
    +
    \lambda_1
    \left\|
        cQ\widehat\Gamma_{\mathrm{can}}
    \right\|_1
    \\
    \text{subject to}
    \quad
    &
    \operatorname{diag}\!\left(cQ\widehat B_{\mathrm{can}}\right)
    =
    \mathbf 1 .
\end{aligned}
\end{equation}
Here \(\lambda_0,\lambda_1\geq 0\) control the relative preference for sparse contemporaneous and lagged effects.

\begin{remark}
The diagonal constraint in \cref{eq:envar_constrained} is essential.  Without this constraint, taking \(c\downarrow0\) drives both sparsity penalties to zero.  Thus the scale \(c\) cannot be optimized without a normalization constraint.  Conversely, fixing \(c=1\) is generally not appropriate either, because the Cholesky representative \(\widehat B_{\mathrm{can}}\) has arbitrary global scale relative to the normalized structural equations.
\end{remark}

For numerical work, we use a soft version of \cref{eq:envar_constrained}.  To avoid boundary shrinkage of the global scale, we solve the penalized problem over a wide compact interval
\(
    \mathcal C=[c_{\min},c_{\max}],
    ~
    0<c_{\min}<c_{\max}<\infty .
\)
The penalized problem is
\begin{equation}\label{eq:envar_penalized}
\begin{aligned}
    (\widehat Q,\widehat c)
    \in
    \arg\min_{\substack{Q\in\mathcal O(p)\\ c\in\mathcal{C}}}
    \quad
    \!\!\!\!\!\!\!
    \lambda_0
    \left\|
        \operatorname{offdiag}\!\left(I\!-\!cQ\widehat B_{\mathrm{can}}\right)
    \right\|_1
    \!\!+\!
    \lambda_1
    \left\|
        cQ\widehat\Gamma_{\mathrm{can}}
    \right\|_1
    \!\!+\!
    \frac{\mu}{2}
    \left\|
        \operatorname{diag}\!\left(cQ\widehat B_{\mathrm{can}}\right)
        \!-\!
        \mathbf 1
    \right\|_2^2
    .
\end{aligned}
\end{equation}
The parameter \(\mu>0\) controls the strength of the diagonal-normalization penalty.  Larger values of \(\mu\) force the solution closer to the normalized model class.

After solving either \cref{eq:envar_constrained} or \cref{eq:envar_penalized}, \textsc{ENVAR} returns
\begin{equation*}
    \widehat B_{\textsc{ENVAR}}
    =
    \widehat c\,\widehat Q\,\widehat B_{\mathrm{can}},
    \qquad
    \widehat A_{0,\textsc{ENVAR}}
    =
    I-\widehat B_{\textsc{ENVAR}},
\qquad
    \widehat A_{1,\textsc{ENVAR}}
    =
    \widehat c\,\widehat Q\,\widehat\Gamma_{\mathrm{can}},
    \qquad
    \widehat\sigma_{\textsc{ENVAR}}
    =
    \widehat c.
\end{equation*}

\section{Simulation Study and Application}\label{sec:results}

\begin{figure}[t]
  \centering
  \begin{subfigure}[b]{0.40\textwidth}
    \centering
    \includegraphics[height=5.00cm]{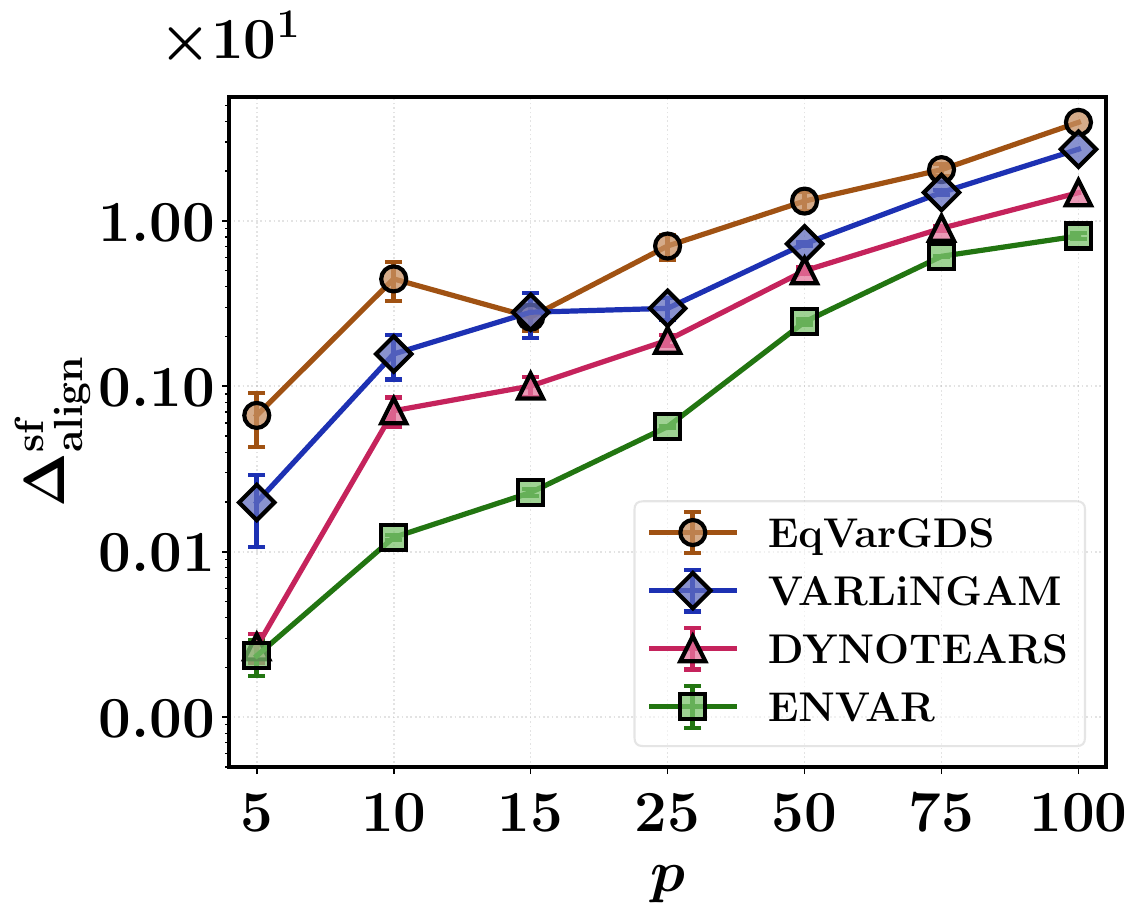}
    \caption{scale-free alignment discrepancy}
    \label{fig:obs_orb_disc_eq_var}
  \end{subfigure}
  \hfill
  \begin{subfigure}[b]{0.55\textwidth}
    \centering
    \includegraphics[height=5.00cm]{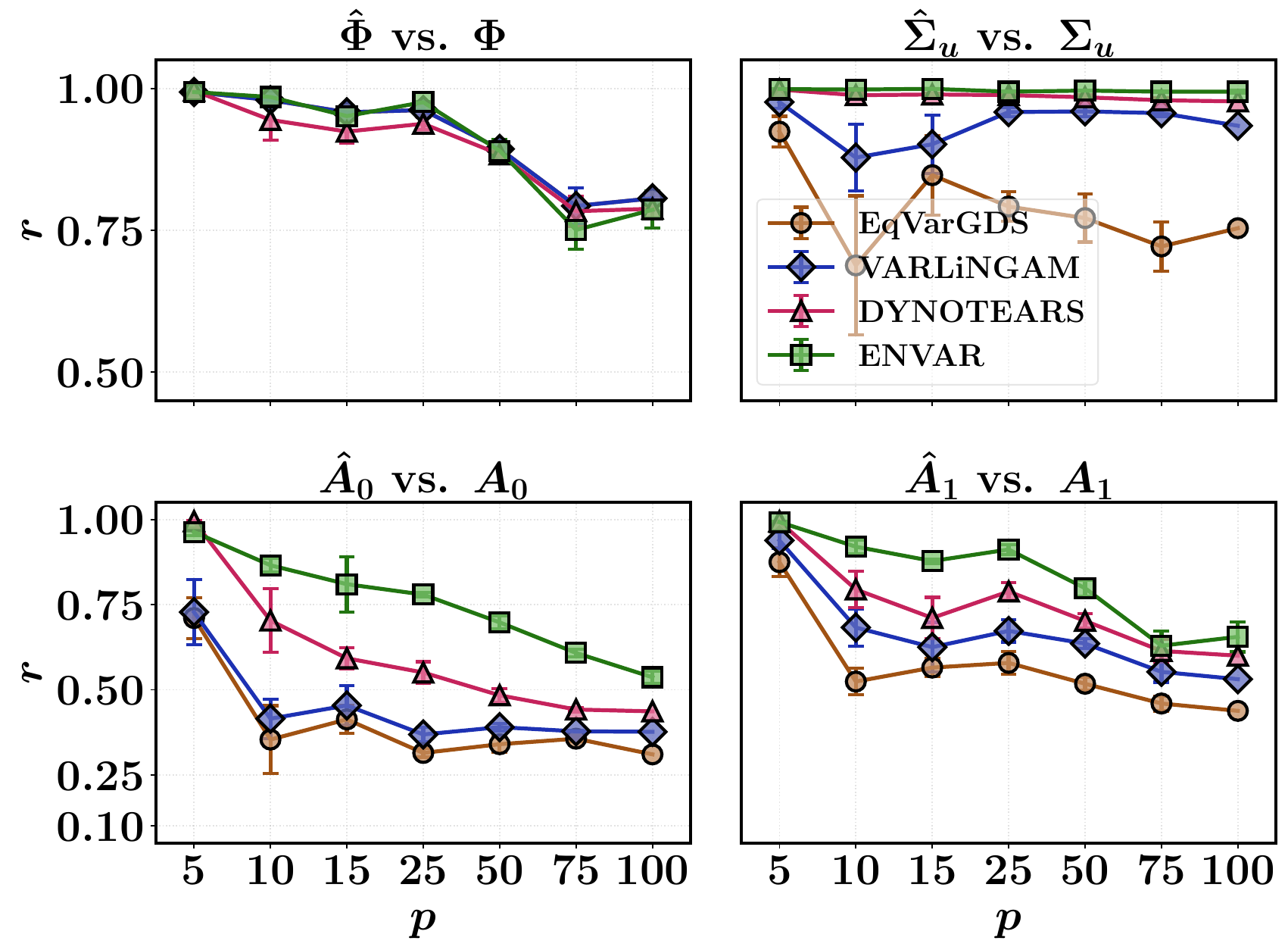}
    \caption{Pearson correlation coefficient ($r$)}
    \label{fig:corr_eq_var}
  \end{subfigure}
  \caption{Performance comparison of EqVarGDS, ENVAR, VARLiNGAM, and DYNOTEARS vs. the number of nodes, under the equal variance assumption for the noise. The number of samples is considered $T = 1000$. The error bars represent SEM across 5 episodes. \textbf{(a)} The scale-free alignment discrepancy of the predicted lag matrices to the ground truth equivalence class. Note the logarithmic scale on the ordinate. \textbf{(b)} The Pearson correlation coefficient ($r$) between the predicted reduced form parameters and their corresponding ground truths. Only significant correlations with p-value $<0.05$ are considered for better comparison. EqVarGDS and VARLiNGAM lines overlap with ENVAR in the top panel.}
  \label{fig:combined_eq_var_results}
\end{figure}

In this section, we evaluate ENVAR in two settings: a synthetic VAR(1) benchmark with known ground-truth and real fMRI data from the Human Connectome Project (HCP)~\citep{van2013wu}.

\subsection{Causal Discovery from Synthetic Data}

To evaluate ENVAR against state-of-the-art alternatives, we generated a synthetic dataset based on the VAR(1) model described in \cref{sec:model_desc} and Erdős–Rényi topology. Specifically, the binary structure (support) of $A_0$ (enforcing $(A_0)_{ii} = 0$) and $A_1$ were generated as independent Erdős–Rényi random graphs with 0.3 edge probability, for each.  Non-zero edge weights were then sampled from the uniform distribution $\mathcal{U}(-1.0, 1.0)$ and $A_1$ was subsequently scaled to achieve the desired spectral radius for $\Phi$ (and ensure stability of~\cref{eq:reduced_form_section2}).  The data matrix $X \in \mathbb{R}^{p \times T}$ was then sequentially generated via~\cref{eq:reduced_form_section2}, where $u_t = (I - A_0)^{-1} e_t$ and $e_t \overset{i.i.d.}{\sim} \mathcal{N}(0, \sigma^2 I_p)$.

We compare the accuracy of learned graphs by ENVAR against three baselines: the Equal Variance Greedy DAG Search method (EqVarGDS) \citep{peters2014identifiability}, VARLiNGAM \citep{hyvarinen2010estimation}, and DYNOTEARS \citep{pamfil2020dynotears}. 
\cref{fig:obs_orb_disc_eq_var} shows the scale-free observational alignment discrepancy ($\Delta_{\mathrm{align}}^{\mathrm{sf}}$, cf.~\cref{eq:sf_align_discrepancy}) as a function of the number of nodes $p$. $\Delta_{\mathrm{align}}^{\mathrm{sf}}$ increases with number of nodes for all methods, but ENVAR achieves the lowest discrepancy across almost the entire range. As a secondary metric, \cref{fig:corr_eq_var} further reports the Pearson correlation coefficient between the predicted and ground-truth reduced-form (i.e., identifiable) parameters, and the structural lag matrices. It can be seen that ENVAR attains the highest correlation in all settings. This shows that ENVAR not only succeeds in its objective of recovering models that are closest to the correct observational equivalence class, but is also able to learn the fundamentally-unidentifiable matrices $A_0$ and $A_1$ with greater accuracy than state of the art.  
In contrast, e.g., DYNOTEARS (the second-best method in most metrics) strictly constrains the contemporaneous lag matrix to be a DAG without accounting for the observational equivalence class of the ground truth, resulting in potentially sacrificing overall predictive accuracy ($\Phi$ and $\Sigma_u$) in favor of structural matrix estimation.

Furthermore, to assess the robustness of ENVAR to violations in its equal-variance assumption, we conducted a sensitivity analysis in which the standard deviations of the per-region noise distributions were themselves drawn independently as $\sigma_i \sim \mathcal{N}(\sigma_{\mathrm{nom}}, \sigma_{\mathrm{std}}^2)$, so that $\sigma_{\mathrm{std}} = 0$ recovers the equal-variance setting and larger $\sigma_{\mathrm{std}}$ corresponds to greater violations. We set $\sigma_{\mathrm{nom}} = 1$ and varied $\sigma_{\mathrm{std}}$ and $p$. The resulting observational alignment discrepancy and the Pearson correlations between predicted and ground-truth reduced-form parameters are shown in \cref{fig:supp_obs_orb_disc_eq_var_violation_ENVAR} and \cref{fig:supp_corr_eq_var_violation_ENVAR}, respectively. Notably, ENVAR's performance degrades very gracefully as $\sigma_{\mathrm{std}}$ increases, particularly for smaller graph sizes. Thus, together, these results demonstrate ENVAR's robust and significantly higher accuracy compared to the state of the art when noise variances are the same or even close to each other among all nodes (\cref{fig:supp_corr_eq_var_violation}). It can be seen that even under the violation of the equal variance assumption, still ENVAR is performing better than other methods.

\subsection{Causal Discovery from Real fMRI Data}

We next evaluated ENVAR on fMRI data from Human Connectome Project (HCP) \citep{van2013wu}. We used motor task data from 25 subjects, motivated by the significantly greater background knowledge about the human motor control system that can be used to evaluate outcomes. Data was preprocessed according to standard practice (see~\cref{sec:exp-details} for details). ENVAR was then applied to the data from each subject and the resulting causal conductivities were evaluated on the basis of how well nodal centralities align with known activation maps during human motor control (see~\cref{sec:exp-details} for details).

\begin{figure}
    \centering
    \includegraphics[width=0.8\linewidth]{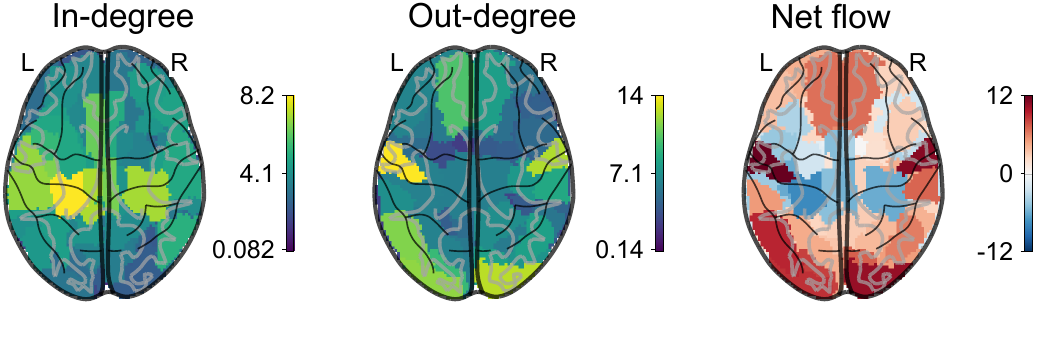} 
    \caption{Nodal centralities of the binarized ENVAR graph during the HCP
motor task. For each cortical parcel: (left) in-degree, the
number of incoming edges; (center) out-degree, outgoing edges;
(right) net flow, out-degree minus in-degree (red = source,
blue = sink). Per subject and run, $A_0$ and $A_1$ were thresholded and
binarized, edges counted across both lags, and degrees averaged across subjects and runs. Subcortical parcels are included in the
computation but omitted from the figure.
}
    \label{fig:glass_brain}
\end{figure}

As seen in \cref{fig:glass_brain}, the resulting nodal centralities align remarkably well with expected activations in the motor cortex. This alignment is more clearly visible from the NeuroSynth word cloud in \cref{fig:supp_hcp_task_motor_wc}, demonstrating the list of words most strongly associated with nodal centralities of graphs learned by ENVAR based on the neuroscience literature~\citep{yarkoni2011large}. These results therefore clearly demonstrate the power of ENVAR in extracting meaningful graphs from highly noisy and temporally under-sampled fMRI data, and motivate future applications in areas where ground-truth causal connections are more poorly understood.

\section{Conclusion}\label{sec:conclusion}

We studied causal discovery in structural VAR models under an equal noise variance assumption, allowing contemporaneous effects that need not form a DAG.  We showed that, unlike the cross-sectional equal noise variance setting, the structural graph is generally not point identified from observational time-series data.  The appropriate target is therefore an observational equivalence class, which we characterized through orthogonal transformations of the structural equations and a global positive scale.  This characterization led to the observational alignment discrepancy and to \textsc{ENVAR}, a sparsity-based procedure for selecting a normalized structural representative. In simulations, \textsc{ENVAR} achieved lower observational alignment discrepancy than the competing methods considered. In the fMRI application, the cognitive terms emphasized by \textsc{ENVAR} were more closely aligned with the motor-task structure, suggesting that the method can produce interpretable effective-connectivity summaries in neuroimaging data.

\textbf{Limitations and future work.} The proposed method relies on a linear first-order Markov model, lack of unobserved confounding, and stationarity. Also notable is the need for and use of nonconvex optimization solvers for ENVAR that may lead to identification of suboptimal local minima. Future work is needed to address these limitations, including, in particular, addressing nonstationarities that arise in fMRI data and learning higher-order models that can capture slower dynamics.

\bibliographystyle{plainnat}
\bibliography{references.bib}

\appendix

\begin{center}

\newpage{\Large \bf Appendix for\\``Causal Discovery in Structural VAR Models Under\\ Equal Noise Variance''}

\vspace{5mm}

SeyedSina Seyedi HasanAbadi, Fahimeh Arab, Erfan Nozari, and AmirEmad Ghassami

\vspace{10mm}

\end{center}

\section{Relation to Cross-Sectional Equal Noise Variance Models}
\label{app:compare}

The model in \cref{eq:var_gen_form} is related to, but substantially different from, the cross-sectional linear Gaussian structural equation model studied by \citet{peters2014identifiability} (and \citet{chen2019causal}).  In the cross-sectional setting, one observes i.i.d. samples from a model of the form
\(
    X = A_0 X + e,
    ~
    e\sim \mathcal{N}(0,\sigma^2 I_p),
\)
where the directed graph associated with \(A_0\) is assumed to be a DAG.  Under the equal noise variance assumption and the DAG assumption, \citet{peters2014identifiability} show that the causal DAG is point identifiable from the observational Gaussian distribution.

Our setting differs in two essential ways.  First, the model is dynamic: the present state depends on both contemporaneous variables and the previous state.  Second, the contemporaneous effect matrix \(A_0\) is not assumed to be acyclic or triangular under any ordering.  The latter distinction is particularly important.  In a time-unrolled graph, lagged edges point from time \(t-1\) to time \(t\) and therefore do not themselves create directed cycles across time.  However, contemporaneous edges among variables at the same time can form directed cycles.  Consequently, the contemporaneous part of the structural VAR model is not a DAG in general.

A tempting idea may be to unfold a VAR(1) model over one time step and treat the extended vector $Y_t := (X_{t-1}^\top, X_t^\top)^\top$ as cross-sectional data on \(2p\) variables.  This does not reduce the problem to the cross-sectional model of \citet{peters2014identifiability}, however.  To see why, consider the first block \(X_{t-1}\) and the second block \(X_t\) of the unfolded vector.  The variables at time \(t\) are driven by the structural noise \(e_t\).  By contrast, the variables at time \(t-1\) already contain the accumulated effect of earlier structural noises and earlier causal interactions.  Therefore, the effective noise attached to the first \(p\) unfolded variables is not another copy of \(e_t\).

This can be made explicit by applying the structural equation one step earlier:
\[
Y_t = \begin{bmatrix}
    A_0 & 0 \\
    A_1 & A_0
\end{bmatrix} Y_t + \begin{bmatrix}
    A_1 X_{t-2} + e_{t-1} \\
    e_t
\end{bmatrix}
\]
If the unobserved variable \(X_{t-2}\) is marginalized out, then the effective residual for the first block is $A_1 X_{t-2} + e_{t-1}$. Under stationarity, its covariance is
\(
    \operatorname{Cov}(A_1 X_{t-2}+e_{t-1})
    =
    A_1\Sigma_XA_1^\top+\sigma^2 I_p,
\)
which is generally not proportional to \(I_p\) and may contain cross-variable correlations.  Hence the unfolded \(2p\)-dimensional cross-sectional representation does not satisfy the equal noise variance condition required by the cross-sectional identifiability result.

There is also a graph-related obstruction.  The cross-sectional result of \citet{peters2014identifiability} relies on the existence of a DAG ordering.  In the structural VAR model considered here, the contemporaneous matrix \(A_0\) may contain cycles.  Therefore, even apart from the failure of equal noise variance after unfolding, the unfolded model does not generally belong to the DAG-based model class for which point identification is known.

These observations explain why the cross-sectional point-identification result does not directly extend to the present time-series setting.  In structural VAR models under equal noise variance, different structural matrices can induce the same reduced-form transition matrix and the same reduced-form residual covariance.  Consequently, the appropriate identification target is an observational equivalence class rather than a single graph. 

\section{Proofs}
\label{app:proofs}

We collect the proofs of the theoretical results in Sections~\ref{sec:obs_equiv} and~\ref{sec:learning}.  Throughout, all models are assumed to satisfy the admissibility conditions stated in Definition~\ref{def:admiss}.  In particular, \(B=I-A_0\) is invertible, \(\sigma>0\), and the reduced-form transition matrix \(\Phi=B^{-1}A_1\) is stable.

\begin{lemma}[Gram--orthogonal factorization]
\label{lem:gram_orthogonal}
Let \(C,D\in\mathbb R^{p\times p}\) be invertible matrices.  Suppose there exists \(\lambda>0\) such that
\[
    D^\top D=\lambda C^\top C .
\]
Then there exists an orthogonal matrix \(Q\in\mathcal O(p)\) such that
\[
    D=\sqrt{\lambda}\,QC .
\]
Equivalently, if \(D^\top D=\beta^2 C^\top C\) for some \(\beta>0\), then there exists \(Q\in\mathcal O(p)\) such that
\[
    D=\beta QC .
\]
\end{lemma}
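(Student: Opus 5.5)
The construction is explicit: set \(Q:=\lambda^{-1/2}DC^{-1}\), which is well defined because \(C\) is invertible. By construction \(D=\sqrt{\lambda}\,QC\). It therefore only remains to check that \(Q\) is orthogonal.

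For orthogonality we compute
\[
    Q^\top Q=\lambda^{-1}C^{-\top}D^\top D\,C^{-1}.
\]
Substituting the hypothesis \(D^\top D=\lambda C^\top C\) gives
\[
    Q^\top Q=C^{-\top}C^\top C\,C^{-1}=I_p.
\]
Since \(Q\) is square, \(Q^\top Q=I_p\) also implies \(QQ^\top=I_p\). Hence \(Q\in\mathcal O(p)\).

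The equivalent formulation follows by taking \(\lambda=\beta^2\) with \(\beta>0\), so that \(\sqrt{\lambda}=\beta\).

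There is no real obstacle here. The only point needing care is that \(C\) must be invertible so that \(C^{-1}\) exists. Invertibility of \(D\) is then automatic, since \(D=\sqrt{\lambda}\,QC\), and is not needed in the argument.

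Alternatively, one could argue through polar decompositions or via the isometry \(Cx\mapsto \lambda^{-1/2}Dx\), which preserves norms. The direct formula above is simpler, and it is the form I would use later in the proofs of \cref{thm:obs_eq_orbit} and \cref{prop:canonical_empirical_orbit}. There one applies the lemma with \(C=B\), \(D=B'\) and \(\lambda={\sigma'}^2/\sigma^2\), obtained from \(\Sigma_u=\Sigma_u'\) after inverting.
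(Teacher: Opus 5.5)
Your proof is correct and matches the paper's: it also sets \(Q:=\lambda^{-1/2}DC^{-1}\), substitutes the hypothesis \(D^\top D=\lambda C^\top C\) to get \(Q^\top Q=I_p\), and obtains the \(\beta\)-version by taking \(\lambda=\beta^2\). Your extra remarks are accurate refinements of the same argument: for square \(Q\), \(Q^\top Q=I_p\) already gives orthogonality, and invertibility of \(D\) follows from the conclusion.
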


\begin{proof}
Define
\[
    Q:=\lambda^{-1/2}DC^{-1}.
\]
We show that \(Q\) is orthogonal.  Since \(C\) is invertible,
\[
\begin{aligned}
    Q^\top Q
    &=
    \lambda^{-1}C^{-\top}D^\top DC^{-1}  \\
    &=
    \lambda^{-1}C^{-\top}(\lambda C^\top C)C^{-1} \\
    &=
    C^{-\top}C^\top CC^{-1}
    =
    I_p .
\end{aligned}
\]
Hence \(Q\in\mathcal O(p)\).  By the definition of \(Q\),
\[
    D=\sqrt{\lambda}\,QC.
\]
The equivalent statement follows by taking \(\lambda=\beta^2\).
\end{proof}

\subsection{Proof of Proposition~\ref{prop:obs_eq_reduced_form}}

\begin{proof}
Recall that the reduced-form representation induced by \(\mathcal M\) is
\[
    X_t=\Phi X_{t-1}+u_t,
    \qquad
    \operatorname{Cov}(u_t)=\Sigma_u,
\]
where
\[
    \Phi=B^{-1}A_1,
    \qquad
    \Sigma_u=\sigma^2B^{-1}B^{-\top}.
\]
Since the model is admissible, \(\rho(\Phi)<1\), and therefore the reduced-form VAR(1) process has a unique stationary Gaussian law.  Its stationary covariance \(\Sigma_X\) is the unique solution of the Lyapunov equation
\[
    \Sigma_X=\Phi\Sigma_X\Phi^\top+\Sigma_u.
\]

First suppose that
\[
    \Phi=\Phi',
    \qquad
    \Sigma_u=\Sigma_u'.
\]
Then the two models induce the same stable reduced-form VAR(1) process.  They therefore have the same stationary covariance, the same transition law, and hence the same finite-dimensional Gaussian distributions.  Thus,
\[
    \mathbb P_{\mathcal M}=\mathbb P_{\mathcal M'},
\]
so \(\mathcal M\sim_{\mathrm{obs}}\mathcal M'\).

Conversely, suppose that
\[
    \mathcal M\sim_{\mathrm{obs}}\mathcal M',
\]
so that the two stationary observed processes have the same law.  Then their lag-zero and lag-one covariance matrices are equal.  Let
\[
    \Gamma_0:=\operatorname{Cov}(X_t),
    \qquad
    \Gamma_1:=\operatorname{Cov}(X_t,X_{t-1}).
\]
For a stationary VAR(1) process,
\[
    \Gamma_1
    =
    \operatorname{Cov}(\Phi X_{t-1}+u_t,X_{t-1})
    =
    \Phi\Gamma_0,
\]
because \(u_t\) is independent of the past.  Moreover, \(\Gamma_0\) is positive definite since \(\Sigma_u\) is positive definite.  Hence
\[
    \Phi=\Gamma_1\Gamma_0^{-1}.
\]
Since the two process laws are equal, they have the same \(\Gamma_0\) and \(\Gamma_1\), and therefore they have the same reduced-form transition matrix:
\[
    \Phi=\Phi'.
\]
Finally, using the Lyapunov equation,
\[
    \Sigma_u
    =
    \Gamma_0-\Phi\Gamma_0\Phi^\top.
\]
Since \(\Gamma_0\) and \(\Phi\) agree for the two processes, it follows that
\[
    \Sigma_u=\Sigma_u'.
\]
This proves the desired equivalence.
\end{proof}

\subsection{Proof of Theorem~\ref{thm:obs_eq_orbit}}

\begin{proof}
We prove both directions.

First suppose that
\[
    \mathcal M\sim_{\mathrm{obs}}\mathcal M'.
\]
By Proposition~\ref{prop:obs_eq_reduced_form}, the two models induce the same reduced-form parameters:
\[
    \Phi=\Phi',
    \qquad
    \Sigma_u=\Sigma_u'.
\]
Equality of the reduced-form residual covariance matrices gives
\[
    \sigma^2B^{-1}B^{-\top}
    =
    {\sigma'}^2{B'}^{-1}{B'}^{-\top}.
\]
Taking inverses on both sides yields
\[
    \sigma^{-2}B^\top B
    =
    {\sigma'}^{-2}{B'}^\top B'.
\]
Equivalently,
\[
    {B'}^\top B'
    =
    \left(\frac{\sigma'}{\sigma}\right)^2B^\top B.
\]
Define
\[
    c:=\frac{\sigma'}{\sigma}>0.
\]
By Lemma~\ref{lem:gram_orthogonal}, applied with \(C=B\), \(D=B'\), and \(\lambda=c^2\), there exists \(Q\in\mathcal O(p)\) such that
\[
    B'=cQB.
\]
The equality \(\Phi=\Phi'\) gives
\[
    B^{-1}A_1={B'}^{-1}A_1'.
\]
Multiplying by \(B'\) on the left gives
\[
    A_1'=B'B^{-1}A_1.
\]
Using \(B'=cQB\), we obtain
\[
    A_1'=cQBB^{-1}A_1=cQA_1.
\]
Finally, by the definition of \(c\),
\[
    \sigma'=c\sigma.
\]
Thus,
\[
    B'=cQB,
    \qquad
    A_1'=cQA_1,
    \qquad
    \sigma'=c\sigma.
\]

Conversely, suppose there exist \(Q\in\mathcal O(p)\) and \(c>0\) such that
\[
    B'=cQB,
    \qquad
    A_1'=cQA_1,
    \qquad
    \sigma'=c\sigma.
\]
Then the reduced-form transition matrix of \(\mathcal M'\) is
\[
\begin{aligned}
    \Phi'
    &=
    {B'}^{-1}A_1'   \\
    &=
    (cQB)^{-1}(cQA_1) \\
    &=
    B^{-1}Q^\top Q A_1 \\
    &=
    B^{-1}A_1
    =
    \Phi.
\end{aligned}
\]
Similarly, the reduced-form residual covariance is
\[
\begin{aligned}
    \Sigma_u'
    &=
    {\sigma'}^2{B'}^{-1}{B'}^{-\top}  \\
    &=
    c^2\sigma^2(cQB)^{-1}(cQB)^{-\top} \\
    &=
    c^2\sigma^2
    \left(c^{-1}B^{-1}Q^\top\right)
    \left(c^{-1}QB^{-\top}\right) \\
    &=
    \sigma^2B^{-1}Q^\top Q B^{-\top} \\
    &=
    \sigma^2B^{-1}B^{-\top}
    =
    \Sigma_u.
\end{aligned}
\]
Therefore \(\Phi'=\Phi\) and \(\Sigma_u'=\Sigma_u\).  By Proposition~\ref{prop:obs_eq_reduced_form},
\[
    \mathcal M\sim_{\mathrm{obs}}\mathcal M'.
\]
The displayed form of the observational equivalence class follows by writing \(A_0'=I-B'=I-cQB\).
\end{proof}

\subsection{Proof of Proposition~\ref{prop:obs_orbit_closed_form}}

\begin{proof}
Let
\[
    S=[\,B\;\;A_1\,],
    \qquad
    S'=[\,B'\;\;A_1'\,].
\]
The observational alignment discrepancy is defined as
\[
\Delta_{\mathrm{align}}^{\mathrm{obs}}(\mathcal M'\mid\mathcal M)
=
\inf_{\substack{Q\in\mathcal O(p)\\ c>0}}
\left\{
    \|S'-cQS\|_F^2
    +
    \eta(\sigma'-c\sigma)^2
\right\}.
\]
We expand the objective for fixed \(Q\) and \(c\).  Using
\[
    \|X-Y\|_F^2=\|X\|_F^2+\|Y\|_F^2-2\operatorname{tr}(YX^\top),
\]
we get
\[
    \|S'-cQS\|_F^2
    =
    \|S'\|_F^2
    +
    c^2\|S\|_F^2
    -
    2c\,\operatorname{tr}(QS{S'}^\top).
\]
Also,
\[
    \eta(\sigma'-c\sigma)^2
    =
    \eta{\sigma'}^2
    +
    \eta c^2\sigma^2
    -
    2\eta c\sigma\sigma'.
\]
Therefore the objective is
\[
\begin{aligned}
    L(Q,c)
    &=
    \|S'\|_F^2+\eta{\sigma'}^2
    +
    c^2(\|S\|_F^2+\eta\sigma^2)    \\
    &\qquad
    -
    2c\left\{
        \operatorname{tr}(QS{S'}^\top)
        +
        \eta\sigma\sigma'
    \right\}.
\end{aligned}
\]

For fixed \(c>0\), minimizing \(L(Q,c)\) over \(Q\in\mathcal O(p)\) is equivalent to maximizing
\[
    \operatorname{tr}(QS{S'}^\top).
\]
Let
\[
    C=S{S'}^\top
\]
and let
\[
    C=U\operatorname{diag}(\gamma_1,\ldots,\gamma_p)V^\top
\]
be a singular value decomposition.  By von Neumann's trace inequality,
\[
    \max_{Q\in\mathcal O(p)}
    \operatorname{tr}(QC)
    =
    \sum_{i=1}^p\gamma_i
    =
    \|C\|_*.
\]
With the notation
\[
    \alpha:=\sum_{i=1}^p\gamma_i=\|S{S'}^\top\|_*,
\]
one maximizer is
\[
    Q^\star=VU^\top.
\]
Indeed, for this choice,
\[
    V^\top Q^\star U
    =
    V^\top VU^\top U
    =
    I_p,
\]
which attains equality in the trace inequality.

Substituting the optimal value \(\alpha\) for the trace term reduces the problem to a one-dimensional minimization over \(c>0\):
\[
    \|S'\|_F^2+\eta{\sigma'}^2
    +
    c^2(\|S\|_F^2+\eta\sigma^2)
    -
    2c(\alpha+\eta\sigma\sigma').
\]
This is a convex quadratic in \(c\).  Since \(B\) is invertible, \(S\neq0\), and hence \(\|S\|_F^2>0\).  For \(\eta>0\), the coefficient \(\alpha+\eta\sigma\sigma'\) is strictly positive, so the minimizer over \(c>0\) is attained at
\[
    c^\star
    =
    \frac{\alpha+\eta\sigma\sigma'}{\|S\|_F^2+\eta\sigma^2}.
\]
Substituting this value gives
\[
    \Delta_{\mathrm{align}}^{\mathrm{obs}}(\mathcal M'\mid\mathcal M)
    =
    \|S'\|_F^2+\eta{\sigma'}^2
    -
    \frac{
        \left(\alpha+\eta\sigma\sigma'\right)^2
    }{
        \|S\|_F^2+\eta\sigma^2
    }.
\]
This is the claimed closed-form expression.  If \(SS'^\top\) has repeated or zero singular values, the maximizing orthogonal matrix need not be unique, but the maximum trace value \(\alpha\), and hence the optimal discrepancy value, is unique.
\end{proof}

\subsection{Proof of Proposition~\ref{prop:sf_reduced_form}}

\begin{proof}
Let \(\{X_t^{\mathcal M}\}_{t\in\mathbb Z}\) and \(\{X_t^{\mathcal M'}\}_{t\in\mathbb Z}\) denote the stationary observed processes generated by \(\mathcal M\) and \(\mathcal M'\), respectively.  We first prove the forward direction.

Suppose
\[
    \mathcal M\sim_{\mathrm{sf}}\mathcal M'.
\]
By definition, there exists \(a>0\) such that
\[
    \{X_t^{\mathcal M'}\}_{t\in\mathbb Z}
    \overset{d}{=}
    \{\sqrt a\,X_t^{\mathcal M}\}_{t\in\mathbb Z}.
\]
Let
\[
    \Gamma_0=\operatorname{Cov}(X_t^{\mathcal M}),
    \qquad
    \Gamma_1=\operatorname{Cov}(X_t^{\mathcal M},X_{t-1}^{\mathcal M}),
\]
and define \(\Gamma_0'\) and \(\Gamma_1'\) analogously for \(\mathcal M'\).  Since multiplying the entire process by \(\sqrt a\) multiplies all covariance matrices by \(a\), we have
\[
    \Gamma_0'=a\Gamma_0,
    \qquad
    \Gamma_1'=a\Gamma_1.
\]
For a stationary VAR(1) process,
\[
    \Phi=\Gamma_1\Gamma_0^{-1},
    \qquad
    \Phi'=\Gamma_1'{\Gamma_0'}^{-1}.
\]
Therefore
\[
    \Phi'
    =
    (a\Gamma_1)(a\Gamma_0)^{-1}
    =
    \Gamma_1\Gamma_0^{-1}
    =
    \Phi.
\]
Moreover,
\[
    \Sigma_u
    =
    \Gamma_0-\Phi\Gamma_0\Phi^\top,
\]
and similarly
\[
    \Sigma_u'
    =
    \Gamma_0'-\Phi'\Gamma_0'{\Phi'}^\top.
\]
Using \(\Gamma_0'=a\Gamma_0\) and \(\Phi'=\Phi\), we obtain
\[
    \Sigma_u'
    =
    a\Gamma_0-\Phi(a\Gamma_0)\Phi^\top
    =
    a(\Gamma_0-\Phi\Gamma_0\Phi^\top)
    =
    a\Sigma_u.
\]
Thus \(\Phi'=\Phi\) and \(\Sigma_u'=a\Sigma_u\).

Conversely, suppose there exists \(a>0\) such that
\[
    \Phi'=\Phi,
    \qquad
    \Sigma_u'=a\Sigma_u.
\]
Let \(\{X_t^{\mathcal M}\}\) be the stationary reduced-form process
\[
    X_t^{\mathcal M}=\Phi X_{t-1}^{\mathcal M}+u_t,
    \qquad
    \operatorname{Cov}(u_t)=\Sigma_u.
\]
Define
\[
    Y_t:=\sqrt a\,X_t^{\mathcal M}.
\]
Then
\[
    Y_t
    =
    \sqrt a\,\Phi X_{t-1}^{\mathcal M}
    +
    \sqrt a\,u_t
    =
    \Phi Y_{t-1}
    +
    \widetilde u_t,
\]
where \(\widetilde u_t:=\sqrt a\,u_t\) has covariance
\[
    \operatorname{Cov}(\widetilde u_t)=a\Sigma_u=\Sigma_u'.
\]
Thus \(\{Y_t\}\) is a stationary Gaussian VAR(1) process with reduced-form parameters \((\Phi',\Sigma_u')\).  Since a stable Gaussian VAR(1) law is determined by its reduced-form parameters, \(\{Y_t\}\) has the same law as \(\{X_t^{\mathcal M'}\}\).  Hence
\[
    \{X_t^{\mathcal M'}\}_{t\in\mathbb Z}
    \overset{d}{=}
    \{\sqrt a\,X_t^{\mathcal M}\}_{t\in\mathbb Z},
\]
which means
\[
    \mathcal M\sim_{\mathrm{sf}}\mathcal M'.
\]
\end{proof}

\subsection{Proof of Theorem~\ref{thm:sf_orbit}}

\begin{proof}
First suppose
\[
    \mathcal M\sim_{\mathrm{sf}}\mathcal M'.
\]
By Proposition~\ref{prop:sf_reduced_form}, there exists \(a>0\) such that
\[
    \Phi'=\Phi,
    \qquad
    \Sigma_u'=a\Sigma_u.
\]
Using the structural forms of the reduced-form residual covariances,
\[
    \Sigma_u=\sigma^2B^{-1}B^{-\top},
    \qquad
    \Sigma_u'={\sigma'}^2{B'}^{-1}{B'}^{-\top},
\]
we get
\[
    {\sigma'}^2{B'}^{-1}{B'}^{-\top}
    =
    a\sigma^2B^{-1}B^{-\top}.
\]
Taking inverses gives
\[
    {\sigma'}^{-2}{B'}^\top B'
    =
    (a\sigma^2)^{-1}B^\top B.
\]
Equivalently,
\[
    {B'}^\top B'
    =
    \frac{{\sigma'}^2}{a\sigma^2}B^\top B.
\]
Define
\[
    c:=\frac{\sigma'}{\sqrt a\,\sigma}>0.
\]
Then
\[
    {B'}^\top B'=c^2B^\top B.
\]
By Lemma~\ref{lem:gram_orthogonal}, applied with \(C=B\), \(D=B'\), and \(\lambda=c^2\), there exists \(Q\in\mathcal O(p)\) such that
\[
    B'=cQB.
\]
Since \(\Phi'=\Phi\),
\[
    {B'}^{-1}A_1'=B^{-1}A_1.
\]
Multiplying by \(B'\) on the left and using \(B'=cQB\) gives
\[
    A_1'
    =
    B'B^{-1}A_1
    =
    cQA_1.
\]
Thus
\[
    B'=cQB,
    \qquad
    A_1'=cQA_1.
\]

Conversely, suppose there exist \(Q\in\mathcal O(p)\) and \(c>0\) such that
\[
    B'=cQB,
    \qquad
    A_1'=cQA_1.
\]
Then, as in the proof of Theorem~\ref{thm:obs_eq_orbit},
\[
    \Phi'
    =
    {B'}^{-1}A_1'
    =
    B^{-1}A_1
    =
    \Phi.
\]
For the residual covariance,
\[
\begin{aligned}
    \Sigma_u'
    &=
    {\sigma'}^2{B'}^{-1}{B'}^{-\top} \\
    &=
    {\sigma'}^2(cQB)^{-1}(cQB)^{-\top} \\
    &=
    \frac{{\sigma'}^2}{c^2}B^{-1}B^{-\top} \\
    &=
    \frac{{\sigma'}^2}{c^2\sigma^2}
    \Sigma_u.
\end{aligned}
\]
Therefore
\[
    \Sigma_u'=a\Sigma_u
    \qquad
    \text{with}
    \qquad
    a:=\frac{{\sigma'}^2}{c^2\sigma^2}>0.
\]
By Proposition~\ref{prop:sf_reduced_form},
\[
    \mathcal M\sim_{\mathrm{sf}}\mathcal M'.
\]
The displayed form of the scale-free observational equivalence class follows by writing \(A_0'=I-B'=I-cQB\) and by observing that, in the scale-free equivalence class, the noise scale may be any positive value, denoted by \(\tau>0\).
\end{proof}

\subsection{Proof of Proposition~\ref{prop:sf_orbit_closed_form}}

\begin{proof}
The scale-free observational alignment discrepancy is
\[
    \Delta_{\mathrm{align}}^{\mathrm{sf}}(\mathcal M'\mid\mathcal M)
    =
    \inf_{\substack{Q\in\mathcal O(p)\\ c>0}}
    \|S'-cQS\|_F^2.
\]
Expanding the squared Frobenius norm gives
\[
    \|S'-cQS\|_F^2
    =
    \|S'\|_F^2
    +
    c^2\|S\|_F^2
    -
    2c\,\operatorname{tr}(QS{S'}^\top).
\]
As in the proof of Proposition~\ref{prop:obs_orbit_closed_form}, let
\[
    C=S{S'}^\top
\]
and let
\[
    C=U\operatorname{diag}(\gamma_1,\ldots,\gamma_p)V^\top
\]
be a singular value decomposition.  By von Neumann's trace inequality,
\[
    \max_{Q\in\mathcal O(p)}
    \operatorname{tr}(QC)
    =
    \sum_{i=1}^p\gamma_i
    =
    \|C\|_*.
\]
Writing
\[
    \alpha:=\sum_{i=1}^p\gamma_i=\|S{S'}^\top\|_*,
\]
one maximizer is \(Q^\star=VU^\top\).  After optimizing over \(Q\), the remaining problem is
\[
    \inf_{c>0}
    \left\{
        \|S'\|_F^2
        +
        c^2\|S\|_F^2
        -
        2c\alpha
    \right\}.
\]
Since \(B\) is invertible, \(S\neq0\), so \(\|S\|_F^2>0\).

If \(\alpha>0\), the quadratic is minimized at
\[
    c^\star=\frac{\alpha}{\|S\|_F^2}>0.
\]
Substituting this value gives
\[
    \Delta_{\mathrm{align}}^{\mathrm{sf}}(\mathcal M'\mid\mathcal M)
    =
    \|S'\|_F^2
    -
    \frac{\alpha^2}{\|S\|_F^2}.
\]
If \(\alpha=0\), the objective after optimizing over \(Q\) is
\[
    \|S'\|_F^2+c^2\|S\|_F^2.
\]
Because the optimization is over \(c>0\), the infimum is approached as \(c\downarrow0\), and the infimum value is \(\|S'\|_F^2\), which is exactly the same value given by the displayed formula with \(\alpha=0\).  This proves the result.
\end{proof}

\subsection{Proof of Proposition~\ref{prop:canonical_empirical_orbit}}

\begin{proof}
We first note that \(\widehat\Sigma_u\succ0\) implies
\[
    \widehat\Omega_u=\widehat\Sigma_u^{-1}\succ0.
\]
Since
\[
    \widehat B_{\mathrm{can}}^\top\widehat B_{\mathrm{can}}
    =
    \widehat\Omega_u,
\]
the matrix \(\widehat B_{\mathrm{can}}\) is invertible.

We prove the first statement.  By definition,
\[
    \widehat\Gamma_{\mathrm{can}}
    =
    \widehat B_{\mathrm{can}}\widehat\Phi.
\]
Therefore,
\[
    \widehat B_{\mathrm{can}}^{-1}\widehat\Gamma_{\mathrm{can}}
    =
    \widehat B_{\mathrm{can}}^{-1}
    \widehat B_{\mathrm{can}}\widehat\Phi
    =
    \widehat\Phi.
\]
Moreover,
\[
\begin{aligned}
    \widehat B_{\mathrm{can}}^{-1}
    \widehat B_{\mathrm{can}}^{-\top}
    &=
    \left(
        \widehat B_{\mathrm{can}}^\top
        \widehat B_{\mathrm{can}}
    \right)^{-1} \\
    &=
    \widehat\Omega_u^{-1}
    =
    \widehat\Sigma_u.
\end{aligned}
\]
Thus the canonical unnormalized structural representation
\[
    \widehat{\mathcal M}_{\mathrm{can}}
    =
    (
        I-\widehat B_{\mathrm{can}},
        \widehat\Gamma_{\mathrm{can}},
        1
    )
\]
induces the empirical reduced-form parameters
\[
    (\widehat\Phi,\widehat\Sigma_u).
\]

We now prove the second statement.  Let
\[
    \widetilde{\mathcal M}
    =
    (\widetilde A_0,\widetilde A_1,\widetilde\sigma)
\]
be any structural representation with
\[
    \widetilde B:=I-\widetilde A_0
\]
invertible and \(\widetilde\sigma>0\).  Suppose first that \(\widetilde{\mathcal M}\) induces the same empirical reduced-form parameters:
\[
    \widetilde B^{-1}\widetilde A_1
    =
    \widehat\Phi,
    \qquad
    \widetilde\sigma^2
    \widetilde B^{-1}\widetilde B^{-\top}
    =
    \widehat\Sigma_u.
\]
Using the identity already proved for the canonical representative,
\[
    \widehat\Sigma_u
    =
    \widehat B_{\mathrm{can}}^{-1}
    \widehat B_{\mathrm{can}}^{-\top}.
\]
Hence
\[
    \widetilde\sigma^2
    \widetilde B^{-1}\widetilde B^{-\top}
    =
    \widehat B_{\mathrm{can}}^{-1}
    \widehat B_{\mathrm{can}}^{-\top}.
\]
Taking inverses gives
\[
    \widetilde\sigma^{-2}
    \widetilde B^\top\widetilde B
    =
    \widehat B_{\mathrm{can}}^\top
    \widehat B_{\mathrm{can}}.
\]
Equivalently,
\[
    \widetilde B^\top\widetilde B
    =
    \widetilde\sigma^2
    \widehat B_{\mathrm{can}}^\top
    \widehat B_{\mathrm{can}}.
\]
By Lemma~\ref{lem:gram_orthogonal}, applied with
\[
    C=\widehat B_{\mathrm{can}},
    \qquad
    D=\widetilde B,
    \qquad
    \lambda=\widetilde\sigma^2,
\]
there exists \(Q\in\mathcal O(p)\) such that
\[
    \widetilde B
    =
    \widetilde\sigma Q\widehat B_{\mathrm{can}}.
\]
Let
\[
    c:=\widetilde\sigma>0.
\]
Then
\[
    \widetilde B
    =
    cQ\widehat B_{\mathrm{can}}.
\]
Since
\[
    \widetilde B^{-1}\widetilde A_1
    =
    \widehat\Phi,
\]
we have
\[
    \widetilde A_1
    =
    \widetilde B\widehat\Phi
    =
    cQ\widehat B_{\mathrm{can}}\widehat\Phi
    =
    cQ\widehat\Gamma_{\mathrm{can}}.
\]
Also, by definition of \(c\),
\[
    \widetilde\sigma=c.
\]
This proves the forward direction.

Conversely, suppose there exist \(Q\in\mathcal O(p)\) and \(c>0\) such that
\[
    \widetilde B
    =
    cQ\widehat B_{\mathrm{can}},
    \qquad
    \widetilde A_1
    =
    cQ\widehat\Gamma_{\mathrm{can}},
    \qquad
    \widetilde\sigma
    =
    c.
\]
Then
\[
\begin{aligned}
    \widetilde B^{-1}\widetilde A_1
    &=
    (cQ\widehat B_{\mathrm{can}})^{-1}
    (cQ\widehat\Gamma_{\mathrm{can}}) \\
    &=
    \widehat B_{\mathrm{can}}^{-1}Q^\top Q
    \widehat\Gamma_{\mathrm{can}} \\
    &=
    \widehat B_{\mathrm{can}}^{-1}
    \widehat\Gamma_{\mathrm{can}}
    =
    \widehat\Phi.
\end{aligned}
\]
Similarly,
\[
\begin{aligned}
    \widetilde\sigma^2
    \widetilde B^{-1}\widetilde B^{-\top}
    &=
    c^2
    (cQ\widehat B_{\mathrm{can}})^{-1}
    (cQ\widehat B_{\mathrm{can}})^{-\top} \\
    &=
    c^2
    \left(c^{-1}\widehat B_{\mathrm{can}}^{-1}Q^\top\right)
    \left(c^{-1}Q\widehat B_{\mathrm{can}}^{-\top}\right) \\
    &=
    \widehat B_{\mathrm{can}}^{-1}Q^\top Q
    \widehat B_{\mathrm{can}}^{-\top} \\
    &=
    \widehat B_{\mathrm{can}}^{-1}
    \widehat B_{\mathrm{can}}^{-\top}
    =
    \widehat\Sigma_u.
\end{aligned}
\]
Therefore \(\widetilde{\mathcal M}\) induces the same empirical reduced-form parameters \((\widehat\Phi,\widehat\Sigma_u)\).  This proves the proposition.
\end{proof}

\section{Experimental Setting and Details}\label{sec:exp-details}

\subsection{Synthetic Data Generation}
To evaluate the performance of the causal discovery algorithms, we simulated continuous time-series data following a first-order Vector Autoregressive, or VAR(1), process. We evaluated the models across varying graph dimensions of size $p \in \{5, 10, 15, 25, 50, 75, 100\}$. For each graph, we generated $T = 1000$ samples. 

The structural matrices defining the instantaneous effects ($A_0$) and lagged effects ($A_1$) were generated by initially applying a sparsity mask with an expected edge density of 30\%. The non-zero entries were drawn uniformly from $[-1.0, 1.0]$. To ensure admissibility,  the diagonal of $A_0$ was set to zero. Further,  we constrained the spectral radius of both $A_0$ and the reduced-form transition matrix $\Phi = (I - A_0)^{-1}A_1$ such that if either spectral radius exceeded a threshold of 0.85, the respective matrices were scaled down (multiplied by a constant) to satisfy this stability threshold.

The exogenous noise terms, $e_t$, were simulated as independent Gaussian variables. To test the algorithms under varying degrees of heteroscedasticity, the base standard deviation for the noise across nodes was centered at $\sigma_{\mathrm{nom}} = 1.0$, with individual node noise standard deviations drawn from a Gaussian distribution with standard deviation $\sigma_{\mathrm{std}} \in \{0.00, 0.025, 0.075, 0.10, 0.15\}$. The final observations $X_t$ were generated recursively via
$$X_t = (I - A_0)^{-1}(A_1 X_{t-1} + e_t)$$
and the whole process was repeated for 5 episodes.

\subsection{Baseline Causal Discovery Methods}

We compared our proposed approach against three established structure learning algorithms designed to predict the lag matrices $\widehat{A}_0$ and $\widehat{A}_1$.

\textbf{Equal Variance Greedy DAG Search (EqVarGDS):} A baseline approach based on an extension of the cross-sectional algorithm in~\citep{peters2014identifiability} that sequentially orders nodes based on conditional variance. Nodes are ordered iteratively by minimizing their conditional variance given previously selected nodes. The instantaneous adjacency matrix $\widehat{A}_0$ is estimated via least-squares regression according to this inferred topological ordering, utilizing an alpha-level threshold ($\alpha = 0.05$) to prune insignificant edges. Subsequently, the lagged adjacency matrix is derived algebraically using the empirical reduced-form transition matrix $\Phi_{\text{emp}}$ via the relation $\widehat{A}_1 = (I - \widehat{A}_0)\Phi_{\text{emp}}$.

\textbf{VAR-LiNGAM~\citep{hyvarinen2010estimation}:} This method first regresses $X_t$ on $X_{t-1}$ to extract the reduced-form residuals $\hat{u}_t$. The DirectLiNGAM algorithm (or HighDimDirectLiNGAM when $T \leq p$) is then applied to the residuals to estimate the instantaneous causal structure. Edges are subsequently pruned based on statistical significance testing ($\alpha = 0.05$).

\textbf{DYNOTEARS~\citep{pamfil2020dynotears}:} A continuous optimization framework for dynamic networks. The hyperparameters controlling the $\ell_1$ penalty on the instantaneous and lagged structures were both set to $\lambda_w = 0.05$ and $\lambda_a = 0.05$, respectively, with a strict zero-thresholding bound of $0.0$.

\subsection{Proposed Method: ENVAR Optimization}
The ENVAR algorithm formulates the causal discovery problem as an optimization over the space of orthogonal matrices to translate the empirical reduced-form parameters ($\Phi_{\text{emp}}$ and $\Sigma_{u,\text{emp}}$) into structural parameters $A_0$ and $A_1$. 

To satisfy the strict requirement of an orthogonal transformation matrix $Q$, we employ PyTorch's orthogonal parameterization framework, ensuring $Q^T Q = I$ at every optimization step without relying on soft penalty terms. This approach leverages automatic differentiation (Autograd) to efficiently navigate the complex parameter space. In other words, rather than relying on soft regularization penalties, that often yield unstable approximations, we employed PyTorch's structural orthogonal parameterization (\texttt{torch.nn.utils.parametrizations.orthogonal}). This mechanism applies a mathematical projection to an underlying unconstrained matrix during the forward pass. Consequently, it guarantees that the $Q$ matrix evaluated in the loss function is strictly orthogonal at every step, allowing the optimizer to search exclusively along the orthogonal manifold using standard gradient-based updates. Additionally, we optimize a strictly positive scaling factor $c$. To enforce the positivity constraint for $c$, we parameterized the variable in the logarithmic domain, by defining the learnable parameter as $\log c$ and evaluating $c = \exp(\log c)$ during the forward pass. Finally, the structural matrices are reconstructed as
$$\widehat{A}_0 = I - c \, Q L_{0,\text{emp}}$$
$$\widehat{A}_1 = c \, Q \Gamma_{0,\text{emp}}$$
where $L_{0,\text{emp}}$ and $\Gamma_{0,\text{emp}}$ are derived from the Cholesky decomposition of the inverse empirical residual covariance and its relation to the empirical transition matrix.

The objective function minimizes a weighted sum of structural sparsity, graph hollowness, and reconstruction error. During our experiments, the reconstruction weight was set to $w_{\text{recons}} = 0.0$. The edge sparsity weights were fixed to $w_1 = 1.0$ (for $\widehat{A}_0$) and $w_2 = 1.0$ (for $\widehat{A}_1$). The hollowness penalty $w_{\text{hollow}}$, which forces the diagonal of $\widehat{A}_0$ toward zero, was scaled based on the dimensionality of the graph: $7.5$ for $p \le 25$, $5.0$ for $25 < p \le 75$, and $2.5$ for $p > 75$. Normalization constants ($\text{norm}_{A0}, \text{norm}_{A1}, \text{norm}_{\text{hollow}}$) were pre-calculated from a baseline random orthogonal projection to ensure stable gradients across dimensions. The model parameters were optimized using the Adam optimizer and an adaptive learning rate defined as $\eta = 5 \times 10^{-3} \cdot (5 / p)$. Gradients were clipped at a maximum norm of 1.0, running for up to 10,000 steps for larger graphs ($p > 10$).

\subsection{Evaluation Metric}

To assess the accuracy of the recovered graphs relative to the ground truth, we utilized the Scale-Free Observational Alignment Discrepancy (Scale-Free OAD).

\subsection{fMRI Data Preprocessing}

For the real fMRI analysis, we used minimally preprocessed
\citep{glasser2013minimal} task-fMRI data from the motor task of the
Human Connectome Project (HCP) S1200 release \citep{van2013wu, barch2013function}. In the HCP motor task, participants are presented with
visual cues instructing them to tap their left or right fingers,
squeeze their left or right toes, or move their tongue, in 12-second
blocks each preceded by a 3-second cue, adapted from the paradigm of
\citet{barch2013function}. Task-fMRI images were collected with the same EPI pulse sequence as
resting-state fMRI in HCP, with the following parameters:
TR $= 720$\,ms, TE $= 33.1$\,ms, flip angle $= 52^\circ$,
FOV $= 208 \times 180$\,mm, matrix $= 104 \times 90$, slice thickness
$= 2.0$\,mm, number of slices $= 72$ (2.0\,mm isotropic), multiband
factor $= 8$, and echo spacing $= 0.58$\,ms. Each motor-task run consisted of $284$
frames (approximately 3 minutes and 24 seconds of scan time); each subject completed two runs (left-right and right-left
phase encoding), and we used both runs in our analysis. Brains were
normalized to fslr32k via the multimodal surface matching
(MSM)-All registration. 

We applied the following preprocessing steps to the minimally preprocessed time-series. First, we computed the global signal defined as the  mean across all the voxels at each time point, and regressed it out from each voxel \citep{murphy2017towards}. We then applied linear detrending to remove low-frequency scanner drift. Voxels were parcellated into $116$ regions of interest (ROI) using Schaefer-100 7-network cortical
parcellation \citep{schaefer2018local} together with the Tian S1 subcortical atlas (16 regions) 
\citep{tian2020topographic}. Each ROI time-series was then $z$-scored across time. The HCP experiments were carried out by the WU-Minn consortium, and
its adherence to ethical standards was approved by the internal
review boards of the respective institutions. Explicit informed
consent was acquired from all participants \citep{van2013wu}.

\subsection{Assessment of Causal Graphs from fMRI Data}

To validate the biological relevance of the estimated causal structures from real functional data, we employed the Neurosynth meta-analytic database~\citep{yarkoni2011large} to decode the cognitive terms associated with any given vector of nodal centralities (in-degree, out-degree, or causal flow. After we applied the ENVAR method to HCP motor task data of each subject (combined across both runs), the obtained the structural matrices $A_0$ and $A_1$ were binarized using a cumulative effect-size thresholding approach to isolate the most robust connections and suppress noise. Specifically, for both $A_0$ and $A_1$, we ranked the absolute values of the estimated edges and retained only the strongest connections that cumulatively accounted for 85\% of the total absolute edge weight in the respective matrix. The final binary causal graph for each run was then constructed as the logical union of the binarized $A_0$ and $A_1$ matrices. To map these network topologies to distinct cognitive functions, we extracted the network in-degree, out-degree, and causal-flow for each ROI from the final binary graphs. Each of these nodal centrality vectors (for each subject) were then provied as input to Neurosynth, which then matched them in MNI coordinates (of atlas ROI centroids) against reported study activations in the literature. This matching utilized a Gaussian spatial kernel ($\sigma = 6$ mm) bounded by a 12 mm search radius. The ROI in-degrees, out-degrees, and net-flows were used to weight the spatially associated studies, which were subsequently multiplied by the Neurosynth TF-IDF feature matrix to generate a continuous score for each cognitive term. Finally, these decoded term scores were averaged across all subjects and runs to construct a population-level cognitive profile for each graph discovery method, illustrated as the word clouds in~\cref{fig:supp_hcp_task_motor_wc}.

\subsection{Computational Resources}

Experiments were performed partly on a local workstation (Lenovo P620 with AMD 3970X 32-Core processor, Nvidia GeForce RTX 2080 GPU, and 512GB of RAM) and partly on a High-Performance Computing cluster (typically with 8 CPU cores and 32GB of RAM per job).

\setcounter{figure}{0}
\renewcommand{\thefigure}{\thesection.\arabic{figure}}

\newpage

\section{Supplementary Figures}\label{sec:supp-figs}

\begin{figure}[h!]
  \centering
  \begin{subfigure}[b]{0.40\textwidth}
    \centering
    \includegraphics[height=4.00cm]{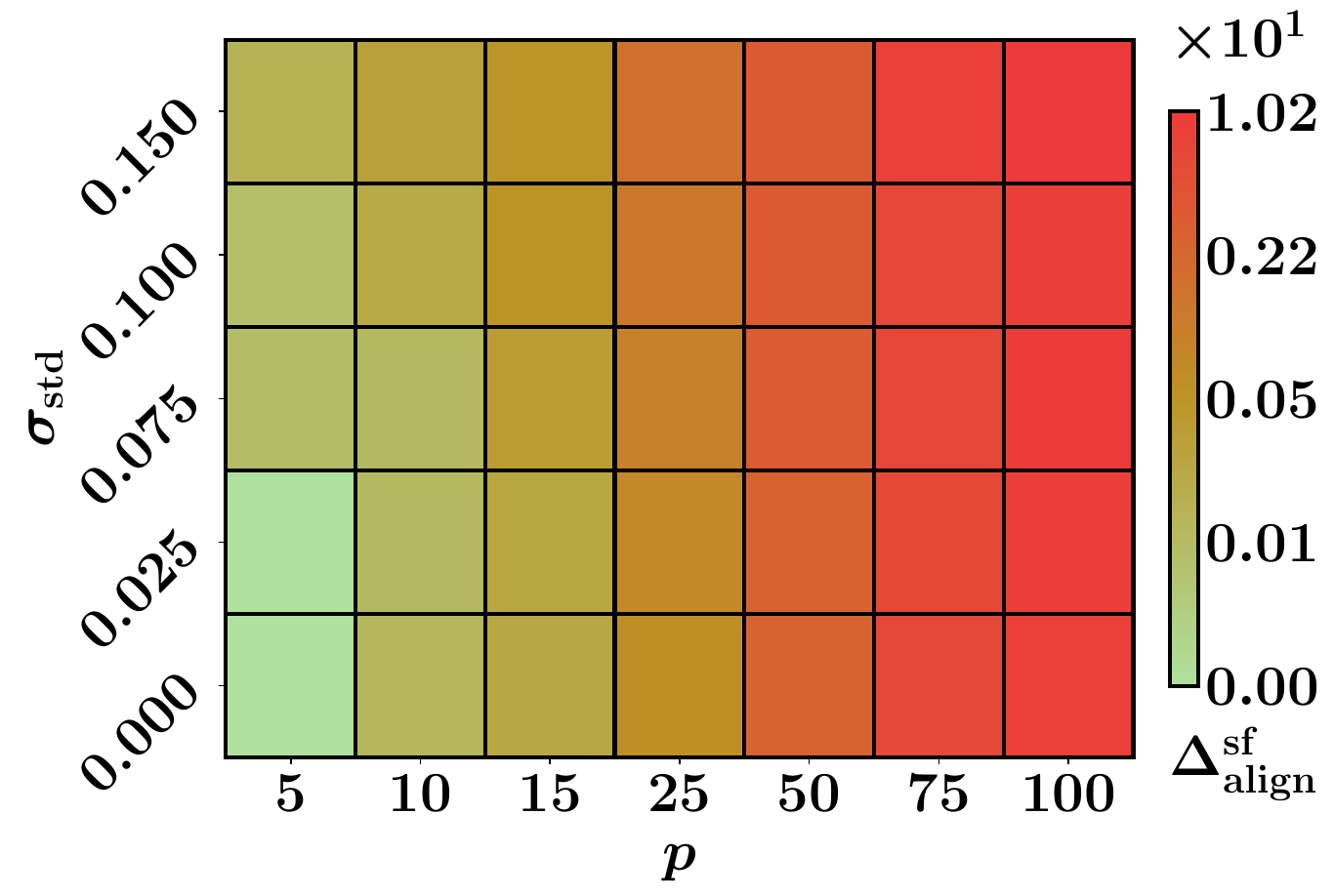}
    \caption{scale-free alignment discrepancy}
    \label{fig:supp_obs_orb_disc_eq_var_violation_ENVAR}
  \end{subfigure}
  \hfill
  \begin{subfigure}[b]{0.55\textwidth}
    \centering
    \includegraphics[height=5.00cm]{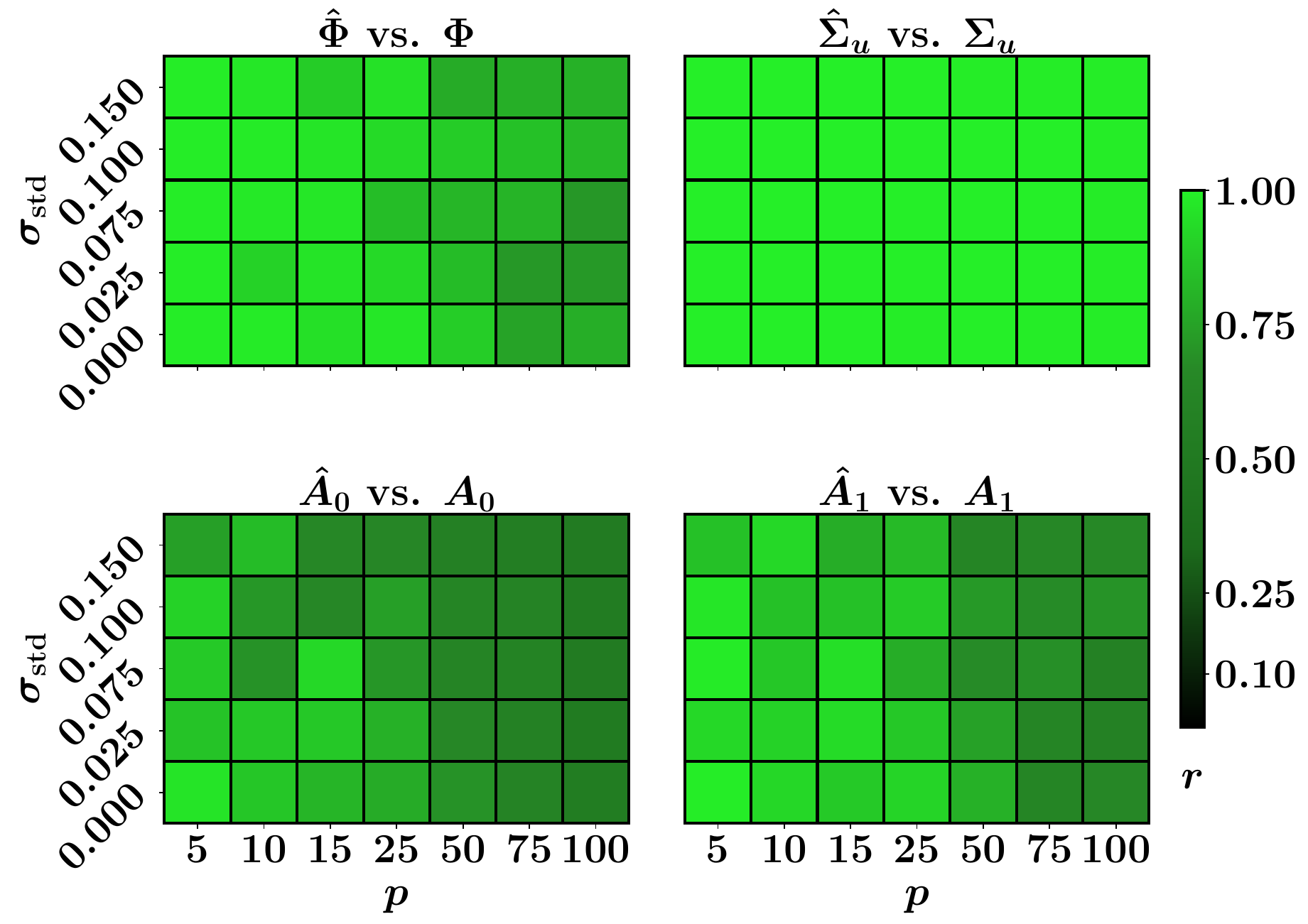}
    \caption{Pearson correlation coefficient ($r$)}
    \label{fig:supp_corr_eq_var_violation_ENVAR}
  \end{subfigure}
  \caption{Performance of ENVAR vs. the number of nodes, under the violation of the equal variance assumption for the noise, i.e., $e_t \stackrel{\mathrm{i.i.d.}}{\sim} \mathcal{N}(0, \sigma_i^2), \sigma_i \sim \mathcal{N}(1.0, \sigma_{\mathrm{std}}^2)$.  The number of samples is considered as $T = 1000$. The values are mean across 5 episodes. \textbf{(a)} The scale-free alignment discrepancy of the predicted lag matrices to the ground truth equivalence class. \textbf{(b)} The Pearson correlation coefficient ($r$) of the predicted reduced form parameters (only significant correlations with $p<0.05$ are considered) with the corresponding ground truth reduced form parameters.}
  \label{fig:supp_eq_var_violation_ENVAR}
\end{figure}

\begin{figure}[h]
  \centering
  \begin{subfigure}[b]{0.40\textwidth}
    \centering
    \includegraphics[height=4.00cm]{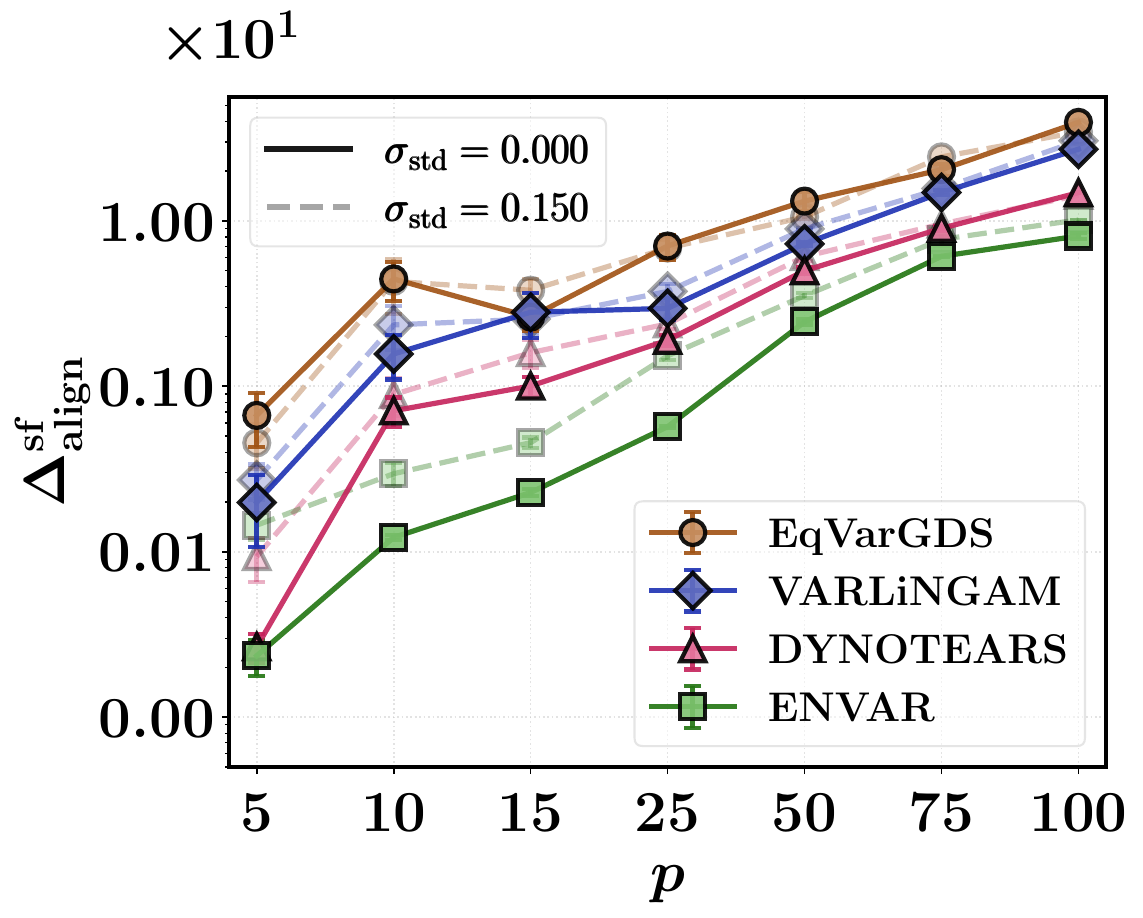}
    \caption{scale-free alignment discrepancy}
    \label{fig:supp_obs_orb_disc_eq_var_violation}
  \end{subfigure}
  \hfill
  \begin{subfigure}[b]{0.55\textwidth}
    \centering
    \includegraphics[height=5.00cm]{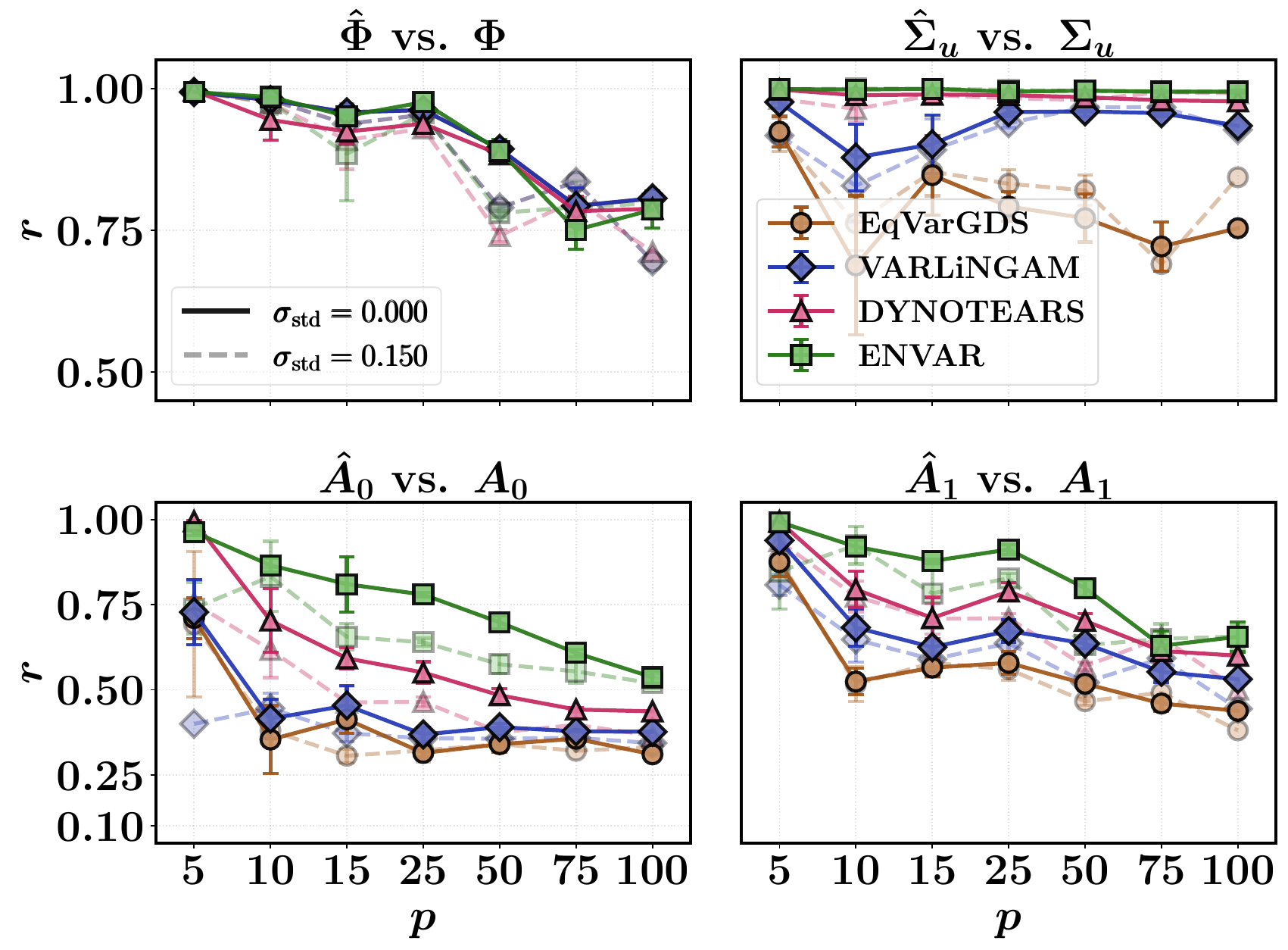}
    \caption{Pearson correlation coefficient ($r$)}
    \label{fig:supp_corr_eq_var_violation}
  \end{subfigure}
  \caption{Performance comparison of EqVarGDS, ENVAR, VARLiNGAM, and DYNOTEARS vs. the number of nodes, under the violation of the equal variance assumption for the noise, i.e., $e_{t,i} \stackrel{\mathrm{i.i.d.}}{\sim} \mathcal{N}(0, \sigma_i^2), \sigma_i \sim \mathcal{N}(1.0, \sigma_{\mathrm{std}}^2)$. The number of samples is considered $T = 1000$. The error bars represent SEM across 5 episodes. \textbf{(a)} The scale-free alignment discrepancy of the predicted lag matrices to the ground truth equivalence class. Note the logarithmic scale on the ordinate. \textbf{(b)} The Pearson correlation coefficient ($r$) between the predicted reduced form parameters and their corresponding ground truths. Only significant correlations with p-value $<0.05$ are considered for better comparison. EqVarGDS and VARLiNGAM lines overlap with ENVAR in the top panel.}
\end{figure}

\begin{figure}[htbp]
    \centering
    \begin{subfigure}{0.32\textwidth}
        \centering
        \includegraphics[trim=0cm 4cm 0cm 4cm, clip, width=\linewidth]{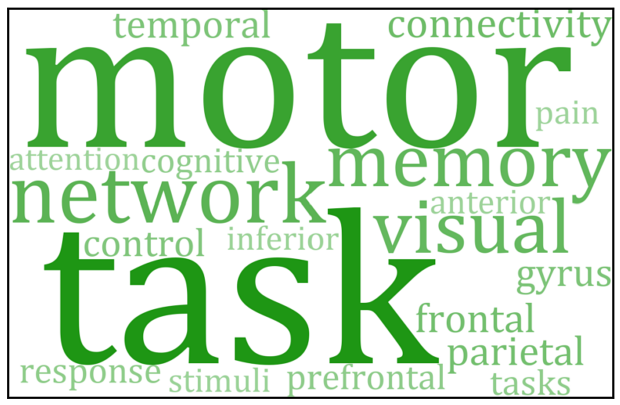}
        \caption{In-degree}
        \label{fig:supp_hcp_task_motor_wc_indeg}
    \end{subfigure}
    \hfill
    \begin{subfigure}{0.32\textwidth}
        \centering
        \includegraphics[trim=0cm 4cm 0cm 4cm, clip, width=\linewidth]{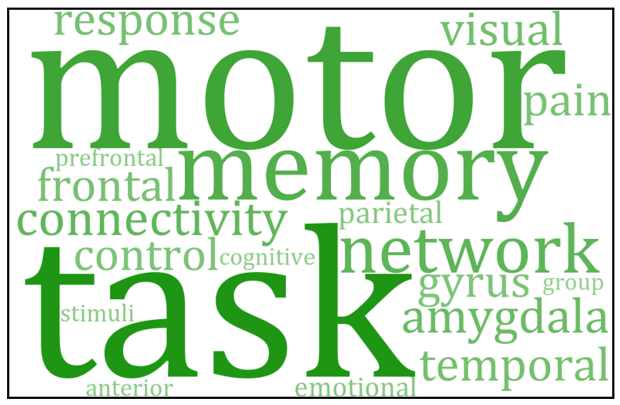}
        \caption{Out-degree}
        \label{fig:supp_hcp_task_motor_wc_outdeg}
    \end{subfigure}
    \hfill
    \begin{subfigure}{0.32\textwidth}
        \centering
        \includegraphics[trim=0cm 4cm 0cm 4cm, clip, width=\linewidth]{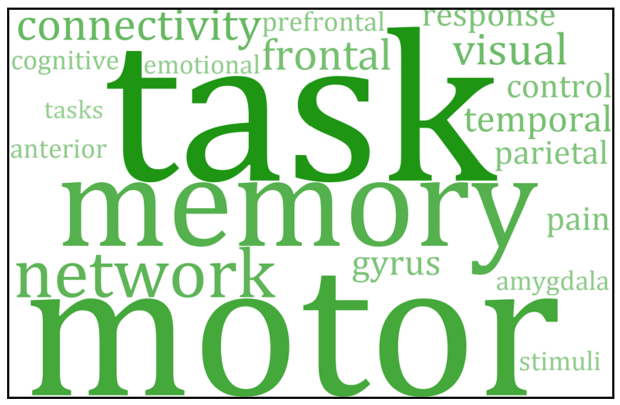}
        \caption{Net-Flow}
        \label{fig:supp_hcp_task_motor_wc_netflow}
    \end{subfigure}
    \caption{Word clouds mapping the in-degree, out-degree, and net flow of the predicted final binary graphs based on HCP motor task data to cognitive terms using Neurosynth~\citep{yarkoni2011large}.}
    \label{fig:supp_hcp_task_motor_wc}
\end{figure}

\end{document}